\newcommand{\chg}[1]{\marginpar{\tiny #1\par}}
\newtheorem{theorem}{Theorem}[section]
\newtheorem{lemma}[theorem]{Lemma}
\newtheorem{notation}[theorem]{Notation}
\newtheorem{proposition}[theorem]{Proposition}
\newtheorem{corollary}[theorem]{Corollary}
\theoremstyle{definition}
\newtheorem{definition}[theorem]{Definition}
\def\blacksquare{
\thinspace\nobreak \vrule width 5pt height 5pt depth 0pt}
\newtheorem{remark}[theorem]{Remark}
\def\dx{\, {\rm d}}
\newcommand{\la}{\lambda}
\newcommand{\eps}{\varepsilon}
\newcommand{\dr}{\partial}
\newcommand{\R}{\mathbb{R}}
\newcommand{\dist}{\mathrm{dist}}
\newcommand{\Dir}{\mathsf{Dir}}
\newcommand{\Neu}{\mathsf{Neu}}
\newcommand{\eff}{\mathsf{eff}}
\newcommand{\curl}{\mathsf{curl}\,}
\newcommand{\F}{\mathsf{F}}
\newcommand{\A}{\mathsf{A}}
\newcommand{\B}{\mathsf{B}}
\newcommand{\app}{\mathsf{app}}
\newcommand{\appp}{\mathsf{app2}}
\newcommand{\lin}{\mathsf{lin}}
\newcommand{\defo}{\mathsf{def}}
\newcommand{\para}{\parallel}
\newcommand{\Dom}{\mathsf{Dom}}
\newcommand{\Id} {\mathsf{Id}}
\date{\small March 26, 2013}
\title{\bf Magnetic effects in curved quantum waveguides}
\author{D.~Krej{\v{c}}i{\v{r}}{\'{\i}}k\footnote{Department of Theoretical Physics, Nuclear Physics Institute ASCR, 25068 \u{R}e\u{z}, Czech Republic; 
e-mail: \texttt{krejcirik@ujf.cas.cz}} \ and \ N.~Raymond\footnote{IRMAR, Universit\'e de Rennes 1, Campus de Beaulieu, F-35042 Rennes cedex, France;
e-mail: \texttt{nicolas.raymond@univ-rennes1.fr}}}
\begin{document}
\maketitle

\begin{abstract}
\chg{}
\noindent
The interplay among the spectrum, geometry and magnetic field
in tubular neighbourhoods of curves in Euclidean spaces is investigated
in the limit when the cross section shrinks to a point.
Proving a norm resolvent convergence,
we derive effective, lower-dimensional models 
which depend on the intensity of the magnetic field and curvatures.
The results are used to establish complete asymptotic
expansions for eigenvalues.
Spectral stability properties based on Hardy-type inequalities
induced by magnetic fields are also analysed.
\medskip \\
\textbf{Keywords:} 
quantum waveguides, tubes, Dirichlet Laplacian, magnetic field, 
effective Hamiltonian, magnetic Hardy inequality
\end{abstract}
%

%\newpage
\section{Introduction}
This paper is concerned with spectral properties of 
a curved quantum waveguide when a magnetic field is applied. 
The configuration space of the waveguide is modelled
by a tube~$\Omega_{\eps}$ about an unbounded curve~$\gamma$ 
in the Euclidean space~$\R^d$, with $d \geq 2$,
where~$\eps$ is a positive parameter 
that homothetically scales the waveguide cross section
$\{\eps\tau: \tau \in \omega\}$. All along this paper $\omega\subset\R^d$ will be a bounded and simply connected domain. The quantum dynamics is governed by the Dirichlet realization
$\mathfrak{L}_{\eps,b\A}^{[d]}$ of the magnetic Laplacian 
\begin{equation}\label{Laplace}
  (-i\nabla_{x}+b\;\!\A(x))^2
  \qquad \mbox{on} \qquad
  L^2(\Omega_{\eps},\dx x)
  ,
\end{equation}
where $b>0$ is a positive parameter and $\A$ a smooth vector potential
associated with a given magnetic field~$\B$.

We are primarily interested in effective models 
for $\mathfrak{L}_{\eps,b\A}^{[d]}$ in the limit $\eps\to 0$,
which corresponds to the scaled cross section of the waveguide
shrinking to a point. 
The other parameter with which we can play is 
the intensity of the magnetic field~$b$. 
From a heuristic point of view, if~$b$ is fixed and~$\eps$ goes to zero, 
we expect that the limiting model will not depend on the magnetic field. 
Indeed, in the limit when~$\eps$ goes to zero, 
$\Omega_{\eps}$ shrinks to the curve~$\gamma$ 
and there is no magnetic field in dimension~$1$. 
However, the situation is much less clear if the parameter~$b$
is allowed to depend on~$\eps$. We shall show that the effective
model substantially depends on the smallness of~$\eps b$
and reveal thus different asymptotic regimes
which lead to distinct spectral phenomena. 

In dimensions 2 and 3, the case without magnetic field and without torsion 
is investigated in the famous paper of Duclos and Exner \cite{Duclos95} 
(see also subsequent generalizations in \cite{Duclos05, KK05, FK08} 
as well as \cite{DEK01, Carron04, LL06, LL07, LL07b, Ro12} 
where quantum layers are analysed). 
In particular, they prove that there is always discrete spectrum 
below the essential spectrum when the waveguide is not straight 
but it is straight asymptotically.
They also investigate the limit $\eps\to0$ to show that the Dirichlet Laplacian on the tube $\Omega_{\eps}$ converges in a suitable sense to the effective one dimensional operator:
\begin{equation*}
  \mathcal{L}^{{\eff}}=-\dr_{s}^2-\frac{\kappa(s)^2}{4}
  \qquad \mbox{on} \qquad
  L^2(\gamma,\dx s)
  .
\end{equation*}
In addition it is proved in \cite{Duclos95} 
that each eigenvalue of this operator
generates an eigenvalue of the initial operator $\mathfrak{L}_{\eps,0}^{[d]}$. 

In dimension~3 it is also possible to twist the waveguide 
by allowing the cross section of the waveguide to non-trivially 
rotate by an angle function~$\theta$
with respect to a relatively parallel frame of~$\gamma$
(then the velocity~$\theta'$ can be interpreted 
as a \enquote{torsion}). 
It is proved in \cite{EKK08} that, 
whereas the curvature is favourable to discrete spectrum, 
the torsion plays against it. 
In particular, the spectrum of a straight 
twisted waveguide is stable under small perturbations
(such as local electric field or bending). 
This repulsive effect of twisting is quantified in~\cite{EKK08}
(see also \cite{K08,KZ1})
by means of a Hardy type inequality. 
The interplay between the effects bending and twisting 
is illuminated in the limit $\eps\to 0$
when one reveals the effective operator \cite{BMLT07,deOliveira_2006,KS12, Grushin_2009} (see also \cite{LTW11, TW13}):
\begin{equation}\label{effective-model}
  \mathcal{L}^{\eff}=-\dr_{s}^2-\frac{\kappa(s)^2}{2}
  +C(\omega)\;\!\theta'(s)^2
  \qquad \mbox{on} \qquad
  L^2(\gamma,\dx s)
  ,
\end{equation}
where $C(\omega)$ is a positive constant whenever $\omega$ is not a disk or annulus. 

Writing~\eqref{Laplace} in suitable curvilinear coordinates
(see~\eqref{L3} below),
one may notice similarities in the appearance of 
the torsion and the magnetic field in the coefficients 
of the operator and it therefore seems natural to
ask the following question: 
\begin{center}
\enquote{Does the magnetic field act as the torsion ?} 
\end{center}
It turns out that the question of the limit $\eps\to 0$ 
in the presence of a magnetic field is investigated in~\cite{Grushin_2008} 
where a model operator of the form \eqref{effective-model} is derived in the case when the waveguide is periodic with respect to~$s$. In particular, the restriction to $\gamma$ of the vector potential appears in the effective model 
(see \cite[Eq.~(2.3)]{Grushin_2008}) 
and cannot be completely gauged out due to the periodicity.

Our previous remark on the fact that the magnetic field 
disappears in the limit $\eps\to 0$ if~$b$ is kept fixed
leads us to the study up to which extent the similarity 
can be justified on the level of various smallness regimes of~$\eps b$.
In our paper we derive an appropriate effective dynamics~$\mathcal{L}^{\eff}$
in each of the regimes. Especially we establish that, 
as soon as~$b$ is of order~$\eps^{-1}$, new effective operators appear and display a competition between the magnetic field and the torsion. Moreover, in the critical regime, we establish complete asymptotic expansions in~$\eps$
for eigenvalues below the essential spectrum. This regime $\eps b\sim1$ is critical in the sense that $\eps b\gg 1$ is a semiclassical regime (which is beyond the scope of this paper).

Two-dimensional waveguides with magnetic field 
were previously investigated in \cite{EK05}. 
The authors prove that, as in the case with torsion, there is a Hardy type inequality 
in the straight waveguide:
\begin{equation}\label{Hardy0}
\int_{\R\times(-1,1)} 
|(-i\nabla+b\A)\psi|^2\dx x
-\int_{\R\times(-1,1)} |\psi|^2\dx x
\geq C\int_{\R\times(-1,1)} \frac{|\psi|^2}{1+x_1^2}\dx x
,
\end{equation}
for $\psi\in H^1_{0}(\R\times(-1,1))$,
where $C$ is a positive constant 
whenever the magnetic field is not identically zero,
and they use it to ensure the stability of the spectrum 
under small perturbations. 
In this paper we extend the inequality~\eqref{Hardy0}
to any dimension~$d$ and investigate the dependence
of the constant~$C$ on the magnetic field. 
We also establish the spectral stability properties
in the full generality.

The organization of the paper is as follows. 
In the forthcoming Section 2 we present our main results in detail.
The remaining sections are devoted to proofs.
The effective models together with asymptotic expansions 
for eigenvalues in dimension two (respectively, three) 
are derived in Section~\ref{2D} (respectively, Section~\ref{3D}).
The magnetic Hardy inequality in arbitrary dimension
and associated spectral stability are established in Section~\ref{rep}.

\section{Main results}
A precise statement of our results requires to start
with some technical prerequisites.

\subsection{Magnetic field}
Let $\A: \Omega_\eps \to \R^d$ denote a smooth vector field,
which will play the role of our magnetic potential.
In the canonical coordinates of~$\R^d$, 
denoted here by $x=(x_{1},\cdots,x_{d})$,
the vector potential induces a $1$-form 
$$\xi_{\A}=\sum_{j=1}^d \A_{j}(x) \dx x_{j}.$$
Its exterior derivative is given by:
$$\sigma_{\B}=d\xi_{\A}=\sum_{1\leq i<j\leq d}\B_{ij}\dx x_{i}\wedge \dx x_{j}, 
\qquad\mbox{where}\qquad
 \B_{ij}=\dr_{x_{i}}\A_{j}-\dr_{x_{j}}\A_{i}
$$
are coefficients of the magnetic tensor.
In dimension~$2$, respectively~$3$, 
the magnetic field~$\B$ can be identified with the scalar, 
respectively the vector:
$$
  \B=\B_{12},
  \qquad\mbox{respectively}\qquad 
  \B=(\B_{23},-\B_{13}, \B_{12}).
$$
We will assume throughout the whole paper that~$\B$ has compact support. 

\begin{remark}
Since $\omega$ is simply connected, all our results will only depend on the magnetic field~$\B$. This is convenient in order to compare the geometric effect of torsion with a physical quantity.
If it were not the case, one should slightly adapt our proofs and we would get results involving the vector potential $\A$.
\end{remark}

\subsection{Two-dimensional waveguides} 
Let us consider a smooth and injective curve $\gamma$: $\R\ni s\mapsto \gamma(s)$ which is parameterized by its arc length $s$. The normal 
to the curve at $\gamma(s)$ is defined as the unique unit vector~$\nu(s)$ 
such that $\gamma'(s)\cdot\nu(s)=0$ and $\det(\gamma',\nu)=1$. We have the relation $\gamma''(s)=-\kappa(s)\nu(s)$ where $\kappa(s)$ denotes the algebraic curvature at the point $\gamma(s)$. 

We can now define standard tubular coordinates. We consider:
$$\R\times(-\eps,\eps)\ni(s,t)\mapsto\Phi(s,t)=\gamma(s)+t\nu(s).$$
We always assume 
\begin{equation}\label{not-overlap}
\mbox{$\Phi$ is injective}
\qquad\mbox{and}\qquad 
\eps\sup_{s\in\R} |\kappa(s)|<1.
\end{equation}
Then it is well known (see \cite{KK05}) that $\Phi$ defines a smooth diffeomorphism from $\R\times(-\eps,\eps)$ 
onto the image $\Omega_{\eps}=\Phi(\R\times(-\eps,\eps))$,
which we identify with our waveguide.
 
Up to changing the gauge, 
the Laplace-Beltrami expression of $\mathfrak{L}_{\eps,b\A}^{[2]}$ in these coordinates is given by (see \cite[App.~F]{FouHel10}):
$$\mathfrak{L}^{[2]}_{\eps,b\mathcal{A}}=(1-t\kappa(s))^{-1}(i\dr_{s}+b\mathcal{A}_{1})(1-t\kappa(s))^{-1}(i\dr_{s}+b\mathcal{A}_{1})-(1-t\kappa(s))^{-1}\dr_{t}(1-t\kappa(s))\dr_{t},$$
with the gauge:
$$\mathcal{A}(s,t)=(\mathcal{A}_{1}(s,t),0)  ,
\quad \mathcal{A}_{1}(s,t)=\int_{0}^t (1-t'\kappa(s))\B(\Phi(s,t'))\dx t'.$$
We let:
$$m(s,t)=(1-t\kappa(s))^{-1/2}.$$
The self-adjoint operator $\mathfrak{L}^{[2]}_{\eps,b\mathcal{A}}$ on $L^2(\R\times(-\eps,\eps),m \dx s\dx t)$ is unitarily equivalent to the self-adjoint operator on $L^2(\R\times(-\eps,\eps),\dx s\dx t)$:
$$\mathcal{L}^{[2]}_{\eps,b\mathcal{A}}=m^{-1}\mathfrak{L}^{[2]}_{\eps,b\A}m.$$
Introducing the rescaling
\begin{equation}\label{rescaling}
t=\eps \tau
,
\end{equation}
we let:
$$\mathcal{A}_{\eps}(s,\tau)=(\mathcal{A}_{1,\eps}(s,\tau),0)=(\mathcal{A}_{1}(s,\eps\tau),0)$$
and denote by $\mathcal{L}^{[2]}_{\eps,b \mathcal{A}_{\eps}}$ the homogenized operator on $L^2(\R\times(-1,1),\dx s\dx\tau)$:
\begin{equation}\label{L2}
\mathcal{L}^{[2]}_{\eps, b\mathcal{A}_{\eps}}=m_{\eps}(i\dr_{s}+b\mathcal{A}_{1,\eps})m_{\eps}^2(i\dr_{s}+b\mathcal{A}_{1,\eps})m_{\eps}-\eps^{-2}\dr_{\tau}^2+V_{\eps}(s,\tau),
\end{equation}
with:
$$m_{\eps}(s,\tau)=m(s,\eps\tau),\quad V_{\eps}(s,\tau)=-\frac{\kappa(s)^2}{4}(1-\eps\kappa(s)\tau)^{-2}.$$

Henceforth we assume that the curvature~$\kappa$ has compact support. 
Recalling that also~$\B$ is supposed to be smooth and have compact support,
it is easy to verify that 
$\mathcal{L}^{[2]}_{\eps, b\mathcal{A}}$,
defined as Friedrichs extension of the operator
initially defined on $\mathcal{C}^\infty_{0}(\R\times(-\eps,\eps))$,
has form domain $H_0^1(\R\times(-\eps,\eps))$.
Similarly, the form domain of
$\mathcal{L}^{[2]}_{\eps, b\mathcal{A}_{\eps}}$
is $H_0^1(\R\times(-1,1))$.

\subsection{Three-dimensional waveguides}
The situation is geometrically more complicated in dimension~3. 
We consider a smooth curve~$\gamma$ 
which is parameterized by its arc length~$s$
and does not overlap itself. 
We use the so-called Tang frame 
(or the relatively parallel frame, see for instance \cite{KS12}) 
to describe the geometry of the tubular neighbourhood of~$\gamma$. 
Denoting the (unit) tangent vector by $T(s)=\gamma'(s)$, 
the Tang frame $(T(s), M_{2}(s), M_{3}(s))$
satisfies the relations:
\begin{eqnarray*}
T'&=&\kappa_{2} M_{2}+\kappa_{3} M_{3},\\
M_{2}'&=&-\kappa_{2} T,\\
M_{3}'&=&-\kappa_{3}T.
\end{eqnarray*}
Here~$\kappa_2$ and~$\kappa_3$ are curvature functions 
relative to the choice of the normal fields~$M_2$ and~$M_3$.  
Although the latter (and therefore the former) are not uniquely defined,
$\kappa^2=\kappa_{2}^2+\kappa_{3}^2=|\gamma''|^2$
is just the square of the usual curvature of~$\gamma$. 

Let $\theta:\R\to\R$ a smooth function (twisting). 
We introduce the map
$\Phi : \R\times(\eps\omega)\to\Omega_{\eps}$ defined by:
\begin{equation}\label{Phi}
x=\Phi(s,t_{2},t_{3})=\gamma(s)+t_{2}(\cos\theta M_{2}(s)+\sin\theta M_{3}(s))+t_{3}(-\sin\theta M_{2}(s)+\cos\theta M_{3}(s)).
\end{equation}
Let us notice that $s$ will often be denoted by $t_{1}$. 
As in dimension two, we always assume:
\begin{equation}\label{not-overlap3}
\mbox{$\Phi$ is injective}
\qquad\mbox{and}\qquad 
\eps\sup_{(\tau_{2},\tau_{3})\in\omega}(|\tau_{2}|+|\tau_{3}|)\,\sup_{s\in\R} |\kappa(s)|<1.
\end{equation}
Sufficient conditions ensuring the infectivity hypothesis
can be found in~\cite[App.~A]{EKK08}.

We define 
$\mathcal{A}= D\Phi \A(\Phi)
=(\mathcal{A}_{1}, \mathcal{A}_{2}, \mathcal{A}_{3})$, 
\begin{eqnarray*}
h&=&1-t_{2}(\kappa_{2}\cos\theta+\kappa_{3}\sin\theta)-t_{3}(-\kappa_{2}\sin\theta+\kappa_{3}\cos\theta),\\
h_{2}&=&-t_{2}\theta',\\
h_{3}&=&t_{3}\theta',
\end{eqnarray*}
and $\mathcal{R}=h_{3}b\mathcal{A}_{2}+h_{2}b\mathcal{A}_{3}$. We also introduce the angular derivative $\dr_{\alpha}=t_{3}\dr_{t_{2}}-t_{2}\dr_{t_{3}}$. We will see in Section \ref{3D} that the magnetic operator $\mathfrak{L}^{[3]}_{\eps,b\A}$ is unitarily equivalent to the operator on $L^2(\Omega_{\eps},h\dx t)$ given by:
\begin{multline}\label{L-frak3}
\mathfrak{L}^{[3]}_{\eps,b\mathcal{A}}=\sum_{j=2,3}h^{-1}(-i\dr_{t_{j}}+b\mathcal{A}_{j})h(-i\dr_{t_{j}}+b\mathcal{A}_{j})\\
+h^{-1}(-i\dr_{s}+b\mathcal{A}_{1}-i\theta'\dr_{\alpha}+\mathcal{R})h^{-1}(-i\dr_{s}+b\mathcal{A}_{1}-i\theta'\dr_{\alpha}+\mathcal{R}).
\end{multline}
By considering the conjugate operator $h^{1/2}\mathfrak{L}^{[3]}_{\eps,b\mathcal{A}} h^{-1/2}$, we find that $\mathfrak{L}^{[3]}_{\eps,b\mathcal{A}}$ is unitarily equivalent to the operator defined on 
$L^2(\R\times(\eps\omega),\dx s \dx t_2 \dx t_3)$ 
given by:
\begin{multline}
\mathcal{L}^{[3]}_{\eps,b\mathcal{A}}=\sum_{j=2,3} (-i\dr_{t_{j}}+b\mathcal{A}_{j})^2-\frac{\kappa^2}{4h^2}\\
+h^{-1/2}(-i\dr_{s}+b\mathcal{A}_{1}-i\theta'\dr_{\alpha}+\mathcal{R})h^{-1}(-i\dr_{s}+b\mathcal{A}_{1}-i\theta'\dr_{\alpha}+\mathcal{R})h^{-1/2}.
\end{multline}
Finally, introducing the rescaling
$$(t_{2}, t_{3})=\eps(\tau_{2}, \tau_{3})=\eps\tau,$$
we define the homogenized operator on $L^2(\R\times\omega,\dx s \dx\tau)$:
\begin{multline}\label{L3}
\mathcal{L}^{[3]}_{\eps,b\mathcal{A}_{\eps}}=\sum_{j=2,3} (-i\eps^{-1}\dr_{\tau_{j}}+b\mathcal{A}_{j,\eps})^2-\frac{\kappa^2}{4h_{\eps}^2}\\ 
+h_{\eps}^{-1/2}(-i\dr_{s}+b\mathcal{A}_{1,\eps}-i\theta'\dr_{\alpha}+\mathcal{R}_{\eps})h_{\eps}^{-1}(-i\dr_{s}+b\mathcal{A}_{1,\eps}-i\theta'\dr_{\alpha}+\mathcal{R}_{\eps})h_{\eps}^{-1/2},
\end{multline}
where $\mathcal{A}_{\eps}(s,\tau)=\mathcal{A}(s,\eps\tau)$, $h_{\eps}(s,\tau)=h(s,\eps\tau)$ and $\mathcal{R}_{\eps}=\mathcal{R}(s,\eps\tau)$.

Henceforth we assume that~$\kappa$ and~$\theta'$ have compact supports.
Again, it is possible to verify that the form domains of 
$\mathcal{L}^{[3]}_{\eps, b\mathcal{A}}$
and $\mathcal{L}^{[3]}_{\eps, b\mathcal{A}_{\eps}}$
are $H_0^1(\R\times(-\eps,\eps))$ and $H_0^1(\R\times(-1,1))$,
respectively.

\subsection{Limiting models and asymptotic expansions}
We can now state our main results concerning the effective models in the limit $\eps\to 0$. We will denote by $\la_{n}^\Dir(\omega)$ the $n$-th eigenvalue of the Dirichlet Laplacian $-\Delta^\Dir_{\omega}$ on $L^2(\omega)$. The first positive and $L^2$-normalized eigenfunction will be denoted by $J_{1}$. 

\begin{definition}[Case $d=2$] 
For $\delta\in(-\infty,1)$, we define:
$$\mathcal{L}^{\eff, [2]}_{\eps,\delta}=-\eps^{-2}\Delta^\Dir_{\omega}-\dr_{s}^2-\frac{\kappa(s)^2}{4}$$
and for $\delta=1$, we let:
$$\mathcal{L}^{\eff,[2]}_{\eps,1}=-\eps^{-2}\Delta^\Dir_{\omega}+\mathcal{T}^{[2]},$$
where
$$\mathcal{T}^{[2]}=-\dr_{s}^2+\left(\frac{1}{3}+\frac{2}{\pi^2}\right) \B(\gamma(s))^2-\frac{\kappa(s)^2}{4}.$$
\end{definition}

\begin{theorem}[Case $d=2$]\label{Thm-2D}
There exists $K$ such that, for all $\delta\in(-\infty,1]$, there exist $\eps_{0}>0, C>0$ such that for all $\eps\in(0,\eps_{0})$:
$$\left\|\left(\mathcal{L}^{[2]}_{\eps,\eps^{-\delta}\mathcal{A}_{\eps}}-\eps^{-2}\la_{1}^\Dir(\omega)+K\right)^{-1}-\left(\mathcal{L}^{\eff,[2]}_{\eps,\delta}-\eps^{-2}\la_{1}^\Dir(\omega)+K\right)^{-1}\right\|\leq C\eps^{1-\delta}, \mbox{ for } \delta<1$$
and:
$$\left\|\left(\mathcal{L}^{[2]}_{\eps,\eps^{-1}\mathcal{A}_{\eps}}-\eps^{-2}\la_{1}^\Dir(\omega)+K\right)^{-1}-\left(\mathcal{L}^{\eff,[2]}_{\eps,1}-\eps^{-2}\la_{1}^\Dir(\omega)+K\right)^{-1}\right\|\leq C\eps.$$
\end{theorem}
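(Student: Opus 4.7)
The plan is a Grushin--Schur reduction based on the orthogonal splitting
$L^2(\R\times\omega)=\mathcal{H}_\parallel\oplus\mathcal{H}_\parallel^\perp$,
where $\mathcal{H}_\parallel=\{u(s)J_1(\tau):u\in L^2(\R)\}$ is the subspace
spanned by the first transverse mode. The argument splits according to the
regime of $\delta$.

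For the subcritical range $\delta<1$, the magnetic potential satisfies
$\eps^{-\delta}\mathcal{A}_{1,\eps}(s,\tau)=\eps^{1-\delta}\tau\B(\gamma(s))+O(\eps^{2-\delta})$,
so $\mathcal{L}^{[2]}_{\eps,\eps^{-\delta}\mathcal{A}_\eps}$ is an
$O(\eps^{1-\delta})$ quadratic-form perturbation of the non-magnetic operator
$\mathcal{L}^{[2]}_{\eps,0}$. I would first establish norm-resolvent
convergence of the non-magnetic operator to
$-\eps^{-2}\Delta^\Dir_\omega-\dr_s^2-\kappa^2/4$ via a direct Feshbach
reduction (or by quoting the Duclos--Exner type results cited in the
introduction), and then use the second resolvent identity combined with
uniform bounds on both resolvents (coming from the spectral gap of the
transverse Laplacian) to transfer the $\eps^{1-\delta}$ error to the
magnetic case.

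The critical case $\delta=1$ is the main work. Here
$b\mathcal{A}_{1,\eps}=\tau\B(\gamma(s))+O(\eps)$ is of order one and the
magnetic field appears in the effective dynamics. Writing
$\mathcal{L}_\eps:=\mathcal{L}^{[2]}_{\eps,\eps^{-1}\mathcal{A}_\eps}$ in
block form with respect to the projector $\Pi$ onto $\mathcal{H}_\parallel$,
\begin{equation*}
\mathcal{L}_\eps-z=\begin{pmatrix}A_\eps-z & B_\eps \\ B_\eps^* & C_\eps-z\end{pmatrix},
\end{equation*}
I would show that for $z$ in a bounded neighbourhood of
$\eps^{-2}\lambda_1^\Dir(\omega)$ the transverse block $C_\eps-z$ is coercive
on $\mathcal{H}_\parallel^\perp$, with bound of order
$\eps^{-2}(\lambda_2^\Dir(\omega)-\lambda_1^\Dir(\omega))$, and hence invertible.
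The Schur complement then gives
$(\mathcal{L}_\eps-z)^{-1}\big|_{\mathcal{H}_\parallel}
=(A_\eps-z-B_\eps(C_\eps-z)^{-1}B_\eps^*)^{-1}$. After expanding $m_\eps$,
$\eps^{-1}\mathcal{A}_{1,\eps}$ and $V_\eps$ in powers of $\eps$, the direct
projection $A_\eps$ produces the $-\dr_s^2-\kappa^2/4$ part and a first
piece of the magnetic coefficient, while the Schur correction is analysed
via the cross-section Poisson problem
$(-\dr_\tau^2-\tfrac{\pi^2}{4})\eta=\tau J_1$ on $\mathcal{H}_\parallel^\perp$
and assembles together with $A_\eps$ into the effective operator
$\mathcal{T}^{[2]}$ with coefficient $\tfrac{1}{3}+\tfrac{2}{\pi^2}$. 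A
comparison of the two Schur inverses then delivers the $O(\eps)$
resolvent bound.

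The hardest step is the rigorous evaluation of the Schur correction
$B_\eps(C_\eps-z)^{-1}B_\eps^*$. Since $B_\eps^*$ contains the unbounded
cross term $2i\tau\B(\gamma(s))\,\dr_s$ and $(C_\eps-z)^{-1}$ is only of
size $\eps^{2}$ as an $L^2\to L^2$ operator, a naive operator-norm
estimate loses the genuine $O(1)$ contribution to the effective coefficient.
One has to work at the form level, test against $u\in H^1(\R)$, and exploit
that the coercivity of the reduced problem controls $\|\dr_s u\|$, so that
the $\dr_s$-factor in $B_\eps$ is correctly balanced against the $\eps^{-2}$
spectral gap. Propagating the remaining $O(\eps)$ errors uniformly in $\eps$
through the Schur inversion, under the compact-support assumptions on
$\kappa$ and $\B$, completes the argument.
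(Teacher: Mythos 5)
Your Grushin/Schur-complement strategy is a genuinely different route from the paper's: the authors never invert the transverse block. Instead they use Lemma~\ref{NRC} to compare resolvents directly through the associated sesquilinear forms, first replacing $\mathcal{L}^{[2]}_{\eps,\eps^{-\delta}\mathcal{A}_\eps}$ by a simplified operator $\mathcal{L}^{\app,[2]}_{\eps,\delta}$ with linearized potential and flattened metric (an $O(\eps)$ form perturbation), and then, for $\delta=1$, decomposing $\phi=\phi^\para+\phi^\perp$ and $\psi=\psi^\para+\psi^\perp$ w.r.t.\ $\Pi_0$ and estimating $\mathcal{D}_\eps$ piece by piece via the $\eps^{-2}(\la_2^\Dir-\la_1^\Dir)$ spectral gap~\eqref{gap}. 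This form-level comparison avoids any explicit inversion of $C_\eps-z$.

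However, your proposal contains a genuine conceptual error in the $\delta=1$ case, precisely at the step you identify as the hardest. You assert that the Schur correction $B_\eps(C_\eps-z)^{-1}B_\eps^*$, analysed via the cross-section Poisson problem $(-\dr_\tau^2-\tfrac{\pi^2}{4})\eta=\tau J_1$, \emph{contributes at order one} and ``assembles together with $A_\eps$'' to give the coefficient of $\B^2$ in $\mathcal{T}^{[2]}$. This is not the case. Acting on $fJ_1\in\mathcal{H}_\parallel$, the off-diagonal block is $B_\eps^*(fJ_1)=\bigl(2i\B\dr_s f + i\B' f\bigr)\,\tau J_1 + \B^2 f\,(1-\Pi_0)\tau^2 J_1 + O(\eps)$, the transverse block satisfies $C_\eps-z\geq \eps^{-2}(\la_2^\Dir-\la_1^\Dir)+O(1)$, and so the Schur quadratic form obeys
\begin{equation*}
\bigl|\langle (C_\eps-z)^{-1}B_\eps^*(fJ_1),\,B_\eps^*(fJ_1)\rangle\bigr|\le C\eps^2\bigl(\|\dr_s f\|^2+\|f\|^2\bigr),
\end{equation*}
i.e.\ it is an $O(\eps^2)$ relative form perturbation — strictly subleading, not $O(1)$. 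The coefficient of $\B^2$ in $\mathcal{T}^{[2]}$ comes \emph{entirely} from the direct projection $A_\eps=\Pi\mathcal{L}_\eps\Pi$, which produces $\|\tau J_1\|_\omega^2\,\B(\gamma(s))^2$ (the mixed term $2\tau\B\,i\dr_s$ drops out upon projection because $\langle\tau J_1,J_1\rangle_\omega=0$). This is confirmed independently by the proof of Corollary~\ref{expansion-eigenvalues-2D}: the Fredholm solvability condition for $(L_0-\gamma_0)\psi_2=\gamma_2 u_0-L_2 u_0$ already yields the full operator $\mathcal{T}^{[2]}$ at order $\eps^0$, without any contribution from $\psi_2^\perp$. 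So your proposal's central heuristic for identifying $\mathcal{T}^{[2]}$ is wrong, and the ``hardest step'' you highlight — balancing an alleged $O(1)$ piece out of the Schur correction — is aimed at a nonexistent target. A corrected Grushin argument would show instead that $A_\eps=-\eps^{-2}\Delta_\omega^\Dir+\mathcal{T}^{[2]}+O(\eps)$ directly and that the Schur term is $O(\eps^2)$.

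Your treatment of the $\delta<1$ case (second resolvent identity off the non-magnetic effective operator) is plausible but sketchy: one must control the unbounded first-order term $b\mathcal{A}_{1,\eps}\,\dr_s$ relative to the effective resolvent, which the paper handles uniformly inside the single form estimate before specializing to $\delta<1$ or $\delta=1$. Structurally the paper's two-step reduction (to $\mathcal{L}^{\app}$, then to $\mathcal{L}^{\eff}$), with Lemma~\ref{NRC} used at each step, is both tighter and more economical than the route you sketch.
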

In the critical regime $\delta=1$, we deduce the following corollary providing the asymptotic expansions of the lowest eigenvalues $\la_{n}^{[2]}(\eps)$ of $\mathcal{L}^{[2]}_{\eps,\eps^{-1}\mathcal{A}_{\eps}}$.

\begin{corollary}[Case $d=2$ and $\delta=1$]\label{expansion-eigenvalues-2D}
Let us assume that $\mathcal{T}^{[2]}$ admits $N$ (simple) eigenvalues $\mu_{0},\cdots, \mu_{N}$ below the threshold of the essential spectrum. Then, for all $n\in\{1,\cdots N\}$, there exist $(\gamma_{j,n})_{j\geq 0}$ and $\eps_{0}>0$ such that for all $\eps\in(0,\eps_{0})$:
\footnote{We write $\mu(\eps)\sim\sum_{j\geq j_{0}}\mu_{j}\eps^{j}$ when for all $J\geq j_{0}$ we can find $\eps_{0}>0$ and $C>0$ such that for $\eps\in(0,\eps_{0})$: $|\mu(\eps)-\sum_{j=j_{0}}^J \mu_{j}\eps^j|\leq C\eps^{J+1}$.}
$$\la_{n}^{[2]}(\eps)\underset{\eps\to 0}{\sim}\sum_{j\geq 0} \gamma_{j,n}\eps^{-2+j},$$
with 
$$\gamma_{-2,n}=\frac{\pi^2}{4},\quad \gamma_{-1,n}=0,\quad \gamma_{0,n}=\mu_{n}.$$
\end{corollary}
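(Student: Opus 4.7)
The proof divides naturally into three stages: extract the leading three coefficients from Theorem~\ref{Thm-2D}, construct formal BKW quasimodes to arbitrary order, and then match quasimode energies with true eigenvalues via the spectral theorem.

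The first stage is essentially a corollary of the norm resolvent convergence. Applied at $\delta=1$, Theorem~\ref{Thm-2D} identifies the low-lying spectrum of $\mathcal{L}^{[2]}_{\eps,\eps^{-1}\mathcal{A}_{\eps}}$ with that of $\mathcal{L}^{\eff,[2]}_{\eps,1}=-\eps^{-2}\Delta^\Dir_{\omega}+\mathcal{T}^{[2]}$. Since this effective operator is a tensor sum, its eigenvalues below $\eps^{-2}\la_{2}^{\Dir}(\omega)$ (which for $\eps$ small contains the $N$ eigenvalues generated by $\mu_{1},\dots,\mu_{N}$) are exactly $\eps^{-2}\la_{1}^{\Dir}(\omega)+\mu_{n}$. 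With $\omega=(-1,1)$, $\la_{1}^{\Dir}(\omega)=\pi^{2}/4$, so $\gamma_{-2,n}=\pi^{2}/4$, $\gamma_{-1,n}=0$, and $\gamma_{0,n}=\mu_{n}$.

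For the full expansion, I would expand $\mathcal{L}^{[2]}_{\eps,\eps^{-1}\mathcal{A}_{\eps}}$ formally in powers of $\eps$ using the Taylor expansions $m_{\eps}=1+\tfrac{\eps}{2}\tau\kappa(s)+O(\eps^{2})$ and $\eps^{-1}\mathcal{A}_{1,\eps}(s,\tau)=\tau\B(\gamma(s))+O(\eps)$ (which is precisely what makes the regime $\delta=1$ critical). This gives $\mathcal{L}^{[2]}_{\eps,\eps^{-1}\mathcal{A}_{\eps}}=\sum_{k\geq -2}\eps^{k}L_{k}$ with $L_{-2}=-\dr_{\tau}^{2}$; crucially, after averaging against $J_{1}$ in $\tau$, the operator $L_{0}$ reduces to $\mathcal{T}^{[2]}$ (here the constant $\tfrac{1}{3}+\tfrac{2}{\pi^{2}}$ arises as $\int_{-1}^{1}\tau^{2}J_{1}(\tau)^{2}\,\dx\tau$). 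I would then seek an ansatz
\[
\psi_{n}\sim\sum_{j\geq 0}\eps^{j}\psi_{n,j}(s,\tau),\qquad
\la_{n}^{[2]}(\eps)\sim\sum_{j\geq -2}\gamma_{j,n}\eps^{j},
\]
and equate order by order. At order $\eps^{-2}$: $\psi_{n,0}=u_{n}(s)J_{1}(\tau)$ with $u_{n}$ yet to be determined. At order $\eps^{j}$ with $j\geq -1$, the Fredholm condition (projection on $J_{1}$ in the $\tau$-variable) is an equation on the scalar component $u_{n,j}(s)$ of $\psi_{n,j}$ of the form $(\mathcal{T}^{[2]}-\mu_{n})u_{n,j}=F_{j}(s)$ where $F_{j}$ depends on $\gamma_{j,n}$ and on lower-order data. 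This is solvable because $\mu_{n}$ is simple: one first reads off $\gamma_{j,n}=\langle R_{j},u_{n}\rangle$ (with $R_{j}$ the source from lower orders), then defines $u_{n,j}$ in $\{u_{n}\}^{\perp}$ by inversion of $\mathcal{T}^{[2]}-\mu_{n}$. The component of $\psi_{n,j+2}$ orthogonal to $J_{1}$ is inverted via $-\dr_{\tau}^{2}-\pi^{2}/4$ on $\{J_{1}\}^{\perp}$.

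Truncating at order $J$ and multiplying by a cutoff in $s$ yields quasimodes $\psi_{n}^{(J)}\in H^{1}_{0}(\R\times(-1,1))$ and approximate eigenvalues $\la_{n}^{(J)}$ with
\[
\bigl\|(\mathcal{L}^{[2]}_{\eps,\eps^{-1}\mathcal{A}_{\eps}}-\la_{n}^{(J)})\psi_{n}^{(J)}\bigr\|\leq C\eps^{J+1}\|\psi_{n}^{(J)}\|.
\]
The spectral theorem then places a point of the spectrum within $C\eps^{J+1}$ of each $\la_{n}^{(J)}$. Since by stage one the first $N$ eigenvalues of $\mathcal{L}^{[2]}_{\eps,\eps^{-1}\mathcal{A}_{\eps}}$ below $\eps^{-2}\la_{1}^{\Dir}(\omega)+\mu_{N}+\eta$ are in one-to-one correspondence with $\mu_{1},\dots,\mu_{N}$ and separated uniformly in $\eps$, a standard dimensional argument using the approximate orthogonality of $\psi_{1}^{(J)},\dots,\psi_{N}^{(J)}$ (orthogonality in $s$ inherited from the eigenfunctions of $\mathcal{T}^{[2]}$) matches each quasimode energy with the corresponding $\la_{n}^{[2]}(\eps)$ to order $\eps^{J+1}$.

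The main obstacle is the bookkeeping needed to prove solvability of the cascade to all orders, together with ensuring decay in $s$ of the $\psi_{n,j}$. Since $\kappa$ and $\B$ have compact support, the source terms generated by $L_{k}$ are compactly supported modulo $u_{n,k'}$-type factors, and Agmon estimates for $\mathcal{T}^{[2]}$ (below its essential spectrum) provide the required exponential decay of each $u_{n,j}$, which legitimizes the cutoff in $s$ at cost $O(\eps^{\infty})$.
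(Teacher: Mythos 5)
Your proposal follows essentially the same route as the paper: formal power-series expansion of $\mathcal{L}^{[2]}_{\eps,\eps^{-1}\mathcal{A}_{\eps}}$ with $L_{-2}=-\dr_{\tau}^2$, construction of quasimodes order by order via Fredholm conditions projected on $J_1$ (yielding $\mathcal{T}^{[2]}$ as the reduced operator for the scalar component), the spectral theorem to place quasi-eigenvalues within $O(\eps^{J+1})$ of the spectrum, and finally the spectral gap from Theorem~\ref{Thm-2D} to identify them with the $\la_n^{[2]}(\eps)$. Your additional remarks about exponential decay of the $f_0$-type factors (justifying the cutoff in $s$) and the approximate orthogonality of the quasimodes make explicit details that the paper leaves implicit, but the argument is the same in substance.
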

Thanks to the spectral theorem, we also get the approximation of the corresponding eigenfunctions at any order (see our quasimodes in \eqref{quasi}). 

In order to present analogous results in dimension three,
we introduce supplementary notation.
The norm and the inner product in $L^2(\omega)$
will be denoted by $\|\cdot\|_{\omega}$
and $\langle\cdot,\cdot\rangle_{\omega}$, respectively.
We also use the standard notation $D_{x}=-i\dr_{x}$.

\begin{definition}[Case $d=3$] For $\delta\in(-\infty,1)$, we define:
$$\mathcal{L}^{\eff, [3]}_{\eps,\delta}=-\eps^{-2}\Delta^\Dir_{\omega}-\dr_{s}^2-\frac{\kappa(s)^2}{4}+\|\dr_{\alpha}J_{1}\|_{\omega}^2\theta'^2$$
and for $\delta=1$, we let:
$$\mathcal{L}^{\eff,[3]}_{\eps,1}=-\eps^{-2}\Delta^\Dir_{\omega}+\mathcal{T}^{[3]},$$
where $\mathcal{T}^{[3]}$ is defined by:
\begin{multline*}
\mathcal{T}^{[3]}=\langle(-i\dr_{s}-i\theta'\dr_{\alpha}-\mathcal{B}_{12}(s,0,0)\tau_{2}-\mathcal{B}_{13}(s,0,0)\tau_{3})^2 \Id(s)\otimes J_{1}, \Id(s)\otimes J_{1}\rangle_{\omega}\\
+\mathcal{B}_{23}^2(s,0,0)\left(\frac{\|\tau J_{1}\|_{\omega}^2}{4}-\langle D_{\alpha}R_{\omega},J_{1}\rangle_{\omega}\right)-\frac{\kappa^2(s)}{4},
\end{multline*}
with $R_{\omega}$ being given in \eqref{J1} and 
\begin{eqnarray*}
\mathcal{B}_{23}(s,0,0)&=&\B(\gamma (s))\cdot T(s),\\
\mathcal{B}_{13}(s,0,0)&=&\B(\gamma (s))\cdot(\cos\theta\, M_{2}(s)-\sin\theta\, M_{3}(s)),\\
\mathcal{B}_{12}(s,0,0)&=&\B(\gamma (s))\cdot(-\sin\theta\, M_{2}(s)+\cos\theta\, M_{3}(s)).
\end{eqnarray*}
\end{definition}
\begin{theorem}[Case $d=3$]\label{Thm-3D}
There exists $K$ such that for all $\delta\in(-\infty,1]$, there exist $\eps_{0}>0, C>0$ such that for all $\eps\in(0,\eps_{0})$:
$$\left\|\left(\mathcal{L}^{[3]}_{\eps,\eps^{-\delta}\mathcal{A}_{\eps}}-\eps^{-2}\la_{1}^\Dir(\omega)+K\right)^{-1}-\left(\mathcal{L}^{\eff,[3]}_{\eps,\delta}-\eps^{-2}\la_{1}^\Dir(\omega)+K\right)^{-1}\right\|\leq C\eps^{1-\delta}, \mbox{ for } \delta<1$$
and:
$$\left\|\left(\mathcal{L}^{[3]}_{\eps,\eps^{-1}\mathcal{A}_{\eps}}-\eps^{-2}\la_{1}^\Dir(\omega)+K\right)^{-1}-\left(\mathcal{L}^{\eff,[3]}_{\eps,1}-\eps^{-2}\la_{1}^\Dir(\omega)+K\right)^{-1}\right\|\leq C\eps.$$
\end{theorem}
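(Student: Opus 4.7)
The plan is to adapt the Grushin / Born--Oppenheimer reduction used for Theorem~\ref{Thm-2D} to the geometrically richer setting of~\eqref{L3}, in which the twisting $\theta'\dr_{\alpha}$ and the cross--coupling $\mathcal{R}_{\eps}$ interact with the vector potential. I work at the level of quadratic forms on $H^1_{0}(\R\times\omega)$, shift both operators by the common constant $-\eps^{-2}\la_{1}^{\Dir}(\omega)+K$ to get coercivity uniformly in~$\eps$, and conclude via the resolvent identity
$$(A+K)^{-1}-(B+K)^{-1}=(A+K)^{-1}(B-A)(B+K)^{-1},$$
once I have controlled the difference of the forms by $O(\eps^{1-\delta})$.

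Before anything, I would fix the gauge in order to make the magnetic contribution of~$\A$ depend only on~$\B$. Using a longitudinal phase $e^{i\Phi(s)}$ one cancels $\mathcal{A}_{1}(s,0,0)$, and a transverse Poincar\'e gauge enforces $\tau_{2}\mathcal{A}_{2}+\tau_{3}\mathcal{A}_{3}\equiv 0$ and $\mathcal{A}_{2}(s,0,0)=\mathcal{A}_{3}(s,0,0)=0$. In these coordinates, Taylor expanding the coefficients $h_{\eps}=1+O(\eps)$, $\mathcal{A}_{j,\eps}$ and $\mathcal{R}_{\eps}$ at $\tau=0$ produces, modulo $O(\eps^{2})$, only expressions in the components $\mathcal{B}_{ij}(s,0,0)$ of the magnetic tensor along the curve; in particular $b\mathcal{A}_{1,\eps}=\eps^{-\delta}[\eps\tau_{2}\dr_{t_{2}}\mathcal{A}_{1}+\eps\tau_{3}\dr_{t_{3}}\mathcal{A}_{1}]+O(\eps^{2-\delta})$ and the gauge rewrites $\dr_{t_{j}}\mathcal{A}_{1}(s,0,0)$ as $-\mathcal{B}_{1j}(s,0,0)$.

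Next I perform the adiabatic reduction: I decompose $\psi(s,\tau)=u(s)J_{1}(\tau)+\psi^{\perp}(s,\tau)$ with $\langle\psi^{\perp}(s,\cdot),J_{1}\rangle_{\omega}=0$. Testing the form of $\mathcal{L}^{[3]}_{\eps,\eps^{-\delta}\mathcal{A}_{\eps}}+K-\eps^{-2}\la_{1}^{\Dir}(\omega)$ against $\psi$, the transverse gap $\eps^{-2}(\la_{2}^{\Dir}(\omega)-\la_{1}^{\Dir}(\omega))$ yields an a priori bound $\|\psi^{\perp}\|+\eps\|\nabla_{\tau}\psi^{\perp}\|=O(\eps^{1-\delta})$ times the form norm, which is the precise small parameter advertised in the theorem. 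Plugging this decomposition into the form and keeping only the $O(1)$ contributions, I recognise successively: the curvature $-\kappa^{2}/4$ from the expansion of $h_{\eps}$; the longitudinal kinetic term $\|\dr_{s}u\|^{2}$; the twisting potential $\|\dr_{\alpha}J_{1}\|_{\omega}^{2}\,\theta'^{2}|u|^{2}$, coming from the action of $-i\theta'\dr_{\alpha}$ on $J_{1}$; and, for $\delta=1$, the full magnetic longitudinal symbol $(-i\dr_{s}-i\theta'\dr_{\alpha}-\mathcal{B}_{12}\tau_{2}-\mathcal{B}_{13}\tau_{3})$ acting on $J_{1}$, producing the first line of $\mathcal{T}^{[3]}$.

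The main obstacle, and the place where most of the work lies, is the derivation of the last term $\mathcal{B}_{23}^{2}(\tfrac14\|\tau J_{1}\|_{\omega}^{2}-\langle D_{\alpha}R_{\omega},J_{1}\rangle_{\omega})$ in the critical regime $\delta=1$. The part involving $\mathcal{B}_{23}=\B\cdot T$ enters through $\eps^{-1}\mathcal{R}_{\eps}$ and through the interaction of $\dr_{\alpha}$ with the transverse gauge; its leading transverse residual is $\mathcal{B}_{23}\,\dr_{\alpha}J_{1}$, which is \emph{not} proportional to~$J_{1}$. One must therefore compute the second-order Born--Oppenheimer correction by solving a Poisson problem on~$\omega$ whose solution is the function $R_{\omega}$ of \eqref{J1}, and then integrate out the transverse variables; the quadratic contributions of $R_{\omega}$ and of $\dr_{\alpha}J_{1}$ combine exactly into $\tfrac14\|\tau J_{1}\|_{\omega}^{2}-\langle D_{\alpha}R_{\omega},J_{1}\rangle_{\omega}$. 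Once this identification is made, an upper and a matching lower bound of the shifted forms give the difference $B-A$ in operator form with norm $O(\eps^{1-\delta})$, and the resolvent identity quoted above concludes the proof.
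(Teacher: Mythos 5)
Your overall strategy -- gauge fixing along the curve, quadratic--form comparison, adiabatic decomposition onto $J_{1}$, and a second--order Born--Oppenheimer correction producing $R_{\omega}$ -- is the right skeleton and matches the paper's architecture (a first approximation linearising $\mathcal{A}$ and freezing $h_{\eps}$, then a projection argument). However, there are two genuine gaps.

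First, and most importantly, in the critical regime $\delta=1$ a single transverse projection onto $J_{1}$ does \emph{not} close the form estimate. The dangerous term in the transverse part of the form is $\eps^{-1}\mathcal{B}_{23}\,D_{\alpha}$: applied to $\Pi_{0}\phi = \langle\phi,J_{1}\rangle_{\omega}J_{1}$, it produces $\eps^{-1}\mathcal{B}_{23}\langle\phi,J_{1}\rangle_{\omega}D_{\alpha}J_{1}$, which is of size $O(\eps^{-1})$ and, being orthogonal to $J_{1}$, must be paired against the orthogonal component $\psi^{\perp}$. Using only $\|\psi^{\perp}\|\lesssim\eps\sqrt{\mathcal{Q}(\psi)}$ this is $O(1)$, not $O(\eps)$, and the $\langle D_{\alpha}R_{\omega},J_{1}\rangle_{\omega}$ piece of $\mathcal{T}^{[3]}$ is \emph{not} recovered from the purely parallel block (where $\langle D_{\alpha}J_{1},J_{1}\rangle_{\omega}=0$). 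The paper resolves this by introducing the $s$-dependent ground state $u_{\eps}(s)$ of the full transverse magnetic operator $\mathcal{P}_{\eps}^{2}(s)$, establishing $\|u_{\eps}-\widetilde{v}_{\eps}\|_{H^{1}(\omega)}=O(\eps^{3})$ with $v_{\eps}=J_{1}+\eps\mathcal{B}_{23}(s,0,0)R_{\omega}$, and then decomposing the sesquilinear form \emph{asymmetrically}: $\Pi_{\eps,s}$ in the argument carrying the approximating form and $\Pi_{0}$ in the argument carrying the effective form. This is what makes $\|\mathcal{P}_{\eps}\phi^{\para_{\eps}}\|$ controllable and makes the $O(1)$ contribution from $\eps^{-1}D_{\alpha}$ cancel precisely against $\mathcal{B}_{23}^{2}\langle D_{\alpha}R_{\omega},J_{1}\rangle_{\omega}$. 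The paper's Remark following the decomposition explicitly flags this as the crucial device. You would need to either adopt this two-projection split, or else further decompose $\psi^{\perp}$ along $R_{\omega}$ and track the cancellation by hand; integrating out the transverse variable after plugging in the perturbed state is not, by itself, a form estimate.

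Second, the claim that "upper and a matching lower bound of the shifted forms give the difference $B-A$ in operator form with norm $O(\eps^{1-\delta})$" cannot be correct as stated: $B-A$ is an \emph{unbounded} operator (it involves $\dr_{s}$, $\dr_{\alpha}$, and for $\delta=1$ the term $\eps^{-1}D_{\alpha}$), and if it were bounded and small the resolvent estimate would be trivial. What actually does the work is Lemma~\ref{NRC}: one needs the \emph{cross-weighted} sesquilinear bound
$$\left|\mathcal{B}_{1}(\phi,\psi)-\mathcal{B}_{2}(\phi,\psi)\right|\le \eta\,\sqrt{\mathcal{Q}_{1}(\psi)}\sqrt{\mathcal{Q}_{2}(\phi)},$$
with $\mathcal{Q}_{1}$ weighting the argument that will be fed through $\mathfrak{L}_{1}^{-1}$ and $\mathcal{Q}_{2}$ the other; a two-sided quadratic-form sandwich is not the same statement and does not give the resolvent bound directly. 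Relatedly, your a priori bound $\|\psi^{\perp}\|+\eps\|\nabla_{\tau}\psi^{\perp}\|=O(\eps^{1-\delta})$ times the form norm is not the correct exponent: the spectral gap gives $\|\psi^{\perp}\|\lesssim\eps\sqrt{\mathcal{Q}(\psi)}$ uniformly in $\delta\le 1$, and the exponent $1-\delta$ enters only through the smallness of the discarded magnetic terms $|b\mathcal{A}^{\lin}_{j,\eps}|\lesssim\eps^{1-\delta}$ in the $\delta<1$ regime.
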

In the same way, this theorem implies asymptotic expansions of eigenvalues $\la_{n}^{[3]}(\eps)$ of $\mathcal{L}^{[3]}_{\eps,\eps^{-1}\mathcal{A}_{\eps}}$.
\begin{corollary}[Case $d=3$ and $\delta=1$]\label{expansion-eigenvalues-3D}
Let us assume that $\mathcal{T}^{[3]}$ admits $N$ (simple) eigenvalues $\nu_{0},\cdots, \nu_{N}$ below the threshold of the essential spectrum. Then, for all $n\in\{1,\cdots N\}$, there exist $(\gamma_{j,n})_{j\geq 0}$ and $\eps_{0}>0$ such that for all $\eps\in(0,\eps_{0})$:
$$\la_{n}^{[3]}(\eps)\underset{\eps\to 0}{\sim}\sum_{j\geq 0} \gamma_{j,n}\eps^{-2+j},$$
with 
$$\gamma_{-2,n}=\la_{1}^\Dir(\omega),\quad \gamma_{-1,n}=0,\quad \gamma_{0,n}=\nu_{n}.$$
\end{corollary}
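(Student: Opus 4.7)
The plan is to combine the norm resolvent convergence from Theorem~\ref{Thm-3D} (which already yields the first three terms $\gamma_{-2,n}=\la_1^\Dir(\omega)$, $\gamma_{-1,n}=0$, $\gamma_{0,n}=\nu_n$) with a formal BKW/Grushin construction of quasimodes to any order. Once the quasimodes are built, the spectral theorem converts the quasimode residuals into sharp eigenvalue estimates and proves the full asymptotic expansion.

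First, I would extract the leading information from Theorem~\ref{Thm-3D}: norm resolvent convergence implies the convergence of isolated discrete eigenvalues (with multiplicity) of $\mathcal{L}^{[3]}_{\eps,\eps^{-1}\mathcal{A}_{\eps}}-\eps^{-2}\la_{1}^\Dir(\omega)$ to those of $\mathcal{T}^{[3]}$. Under the simplicity assumption on $\nu_n$ and the existence of a spectral gap, this gives both the three claimed leading coefficients and the essential ingredient needed later: the existence of a uniform (in $\eps$ small) spectral gap around $\eps^{-2}\la_1^\Dir(\omega)+\nu_n$ separating $\la_n^{[3]}(\eps)$ from the rest of the spectrum.

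Next I would construct, for fixed $n$, a formal power series quasimode
\begin{equation*}
\psi_\eps(s,\tau)\;\sim\;\sum_{j\geq 0}\eps^j\,\psi_j(s,\tau),
\qquad
\mu_\eps\;\sim\;\eps^{-2}\la_{1}^\Dir(\omega)+\sum_{j\geq 0}\gamma_{j,n}\eps^j,
\end{equation*}
by expanding the rescaled operator \eqref{L3} in powers of $\eps$. Since $h_\eps$, $\mathcal{A}_{\eps}$ and $\mathcal{R}_\eps$ are smooth functions of $\eps\tau$ with $\tau$ varying in a bounded set, Taylor expansion produces an operator-valued formal series $\mathcal{L}^{[3]}_{\eps,\eps^{-1}\mathcal{A}_\eps}\sim\eps^{-2}\mathcal{P}_0+\eps^{-1}\mathcal{P}_1+\mathcal{P}_2+\eps\mathcal{P}_3+\cdots$ with $\mathcal{P}_0=-\Delta^\Dir_\omega$. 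Matching powers of $\eps$ in the eigenvalue equation $(\mathcal{L}^{[3]}_{\eps,\eps^{-1}\mathcal{A}_\eps}-\mu_\eps)\psi_\eps\sim 0$ gives a hierarchy of equations. The leading equation forces $\psi_0(s,\tau)=f_0(s)\,J_1(\tau)$. At each higher order, projecting onto $J_1$ in the transverse variables (Fredholm alternative in $L^2(\omega)$) yields the solvability condition; at order $\eps^{-1}$ this gives $\gamma_{-1,n}=0$ and a transverse correction, and at order $\eps^0$ one recovers exactly $(\mathcal{T}^{[3]}-\nu_n)f_0=0$, so $f_0$ is the $n$-th eigenfunction of $\mathcal{T}^{[3]}$ and $\gamma_{0,n}=\nu_n$. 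Inductively, at order $\eps^k$ the compatibility condition with $J_1$ determines $\gamma_{k,n}$ via a solvability equation of the form $(\mathcal{T}^{[3]}-\nu_n)f_k=F_k-\gamma_{k,n}f_0$; since $\nu_n$ is simple, this fixes $\gamma_{k,n}$ uniquely and yields $f_k$ modulo $\mathrm{span}(f_0)$, and then $\psi_k$ is recovered in the orthogonal complement of $J_1$ (solving a Poisson-type equation on $\omega$).

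Finally, truncating at order $J$ and using a cut-off in $s$ to localize the quasimode (the supports of $\kappa$, $\theta'$ and $\B$ are compact, so $f_0$ decays and truncation costs $O(\eps^\infty)$), one obtains $\widetilde\psi_\eps^{(J)}$, $\mu_\eps^{(J)}=\eps^{-2}\la_1^\Dir(\omega)+\sum_{j=0}^J\gamma_{j,n}\eps^j$ with
\begin{equation*}
\bigl\|(\mathcal{L}^{[3]}_{\eps,\eps^{-1}\mathcal{A}_\eps}-\mu_\eps^{(J)})\widetilde\psi_\eps^{(J)}\bigr\|\leq C_J\,\eps^{J+1}\,\|\widetilde\psi_\eps^{(J)}\|.
\end{equation*}
By the spectral theorem this forces $\operatorname{dist}(\mu_\eps^{(J)},\mathrm{sp}(\mathcal{L}^{[3]}_{\eps,\eps^{-1}\mathcal{A}_\eps}))\leq C_J\eps^{J+1}$. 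Using the uniform spectral gap around $\la_n^{[3]}(\eps)$ guaranteed by Theorem~\ref{Thm-3D} and simplicity of $\nu_n$, this distance is attained at $\la_n^{[3]}(\eps)$, yielding $|\la_n^{[3]}(\eps)-\mu_\eps^{(J)}|\leq C_J\eps^{J+1}$, which is the claimed expansion.

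The main obstacle is purely bookkeeping: keeping track of the Taylor expansions of $h_\eps^{-1/2}$, $h_\eps^{-1}$, $\mathcal{A}_{j,\eps}$ and $\mathcal{R}_\eps$ and verifying that the operators $\mathcal{P}_k$ preserve enough regularity so that the iterative Fredholm construction can be carried out in spaces where $f_k,\psi_k$ decay sufficiently in $s$. Matching the critical magnetic contributions (the $\mathcal{B}_{ij}$-terms appearing in $\mathcal{T}^{[3]}$) at order $\eps^0$ requires care because the substitution $b=\eps^{-1}$ produces $\eps^{-1}\mathcal{A}_{j,\eps}=\eps^{-1}\mathcal{A}_j(s,\eps\tau)$, whose Taylor expansion starts at order $\eps^0$ with a linear combination of $\mathcal{B}_{ij}(s,0,0)\tau_j$; this is precisely what forces the critical regime and explains why the non-critical calculation carries through at every order without additional singularities.
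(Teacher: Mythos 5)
Your proposal follows essentially the same route as the paper: expand the rescaled operator \eqref{L3} in formal powers of $\eps$ with leading term $-\Delta^\Dir_\omega$, build quasimodes $\psi_\eps\sim\sum\eps^j\psi_j$ order by order via the Fredholm alternative in the transverse variables (the order-$\eps^{-1}$ solvability condition giving $\gamma_{-1,n}=0$ since $\langle D_\alpha J_1,J_1\rangle_\omega=0$, the order-$\eps^0$ condition producing $\mathcal{T}^{[3]}f_0=\nu_n f_0$), then convert the $O(\eps^{J+1})$ residual into an eigenvalue estimate via the spectral theorem and separate eigenvalues using the spectral gap supplied by Theorem~\ref{Thm-3D}. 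This is exactly the paper's strategy (it proves the 2D case in Section~\ref{2D} and refers to it as a "slight adaptation" for 3D); your extra care about cutting off in $s$ is harmless and implicitly subsumed by the exponential decay of $f_0$.
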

As in two dimensions, we also get the corresponding expansion for the eigenfunctions. Complete asymptotic expansions for eigenvalues in finite 
three-dimensional waveguides
without magnetic field were also previously established in 
\cite{Grushin_2009, Borisov-Cardone_2011}. Such expansions were also obtained in \cite{Grushin_2008} in the case $\delta=0$ in a periodic framework.

We refer to Sections~\ref{2D} and~\ref{3D} for proofs of the results
in the case of dimension~2 and~3, respectively. 
In agreement with the expectation mentioned in the introduction,
no magnetic effect can be observed via the limiting models 
provided that the quantity~$\eps b$ is negligible 
in the limit as $\eps \to 0$. 
On the other hand, in dimension~$2$ the magnetic field 
plays the same (repulsive) role as the torsion in dimension~$3$
provided that~$\eps b$ is of order one.
The effect of magnetic field is much more complex in dimension~$3$
in the latter regime.

\subsection{A Hardy inequality in straight magnetic waveguides} 
In dimension $2$, the limiting model (with $\delta=1$) enlightens the fact that the magnetic field plays against the curvature, whereas in dimension $3$ this repulsive effect is not obvious (it can be seen that $\langle D_{\alpha}R_{\omega},J_{1}\rangle_{\omega}\geq 0$). Nevertheless, if $\omega$ is a disk, we have $\langle D_{\alpha}R_{\omega},J_{1}\rangle_{\omega}=0$ and thus the component of the magnetic field parallel to~$\gamma$ plays against the curvature (in comparison, a pure torsion has no effect when the cross section is a disk). In the flat case ($\kappa=0$), we can quantify this repulsive effect by means of a magnetic Hardy inequality 
(see Section~\ref{rep} for the proofs).
\begin{theorem}\label{stability}
Let $d\geq 2$. Let us consider $\Omega=\R\times \omega$. 
For $R>0$, we let:
$$\Omega(R)=\{t\in\Omega : |t_{1}|< R\}.$$
Let $\A$ be a smooth vector potential such that $\sigma_{\B}$ is not zero on $\Omega(R_{0})$ for some $R_{0}>0$. Then, there exists $C>0$ such that, for all $R\geq R_{0}$, there exists $c_{R}(\B)>0$ such that, we have:
\begin{equation}\label{Hardy}
\int_{\Omega} |(-i\nabla+\A)\psi|^2-\la_{1}^\Dir(\omega)|\psi|^2\,\dx t\geq \int _{\Omega}\frac{c_{R}(\B)}{1+s^2}|\psi|^2\dx t,\quad \forall\psi\in \mathcal{C}^\infty_{0}(\Omega).
\end{equation}
Moreover we can take:
$$c_{R}(\B)=\left(1+CR^{-2}\right)^{-1}\min\left(\frac{1}{4},\la_{1}^{\Dir,\Neu}(\B,\Omega(R))-\la_{1}^\Dir(\omega)\right),$$
where $\la_{1}^{\Dir,\Neu}(\B,\Omega(R))$ denotes the first eigenvalue of the magnetic Laplacian on $\Omega(R)$, with Dirichlet condition on $\R\times\dr\omega$ and Neumann condition on $\{|s|=R\}\times\omega$.
\end{theorem}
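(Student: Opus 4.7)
The plan is to combine a strict \emph{magnetic spectral gap} on the truncated cylinder $\Omega(R)$ (with mixed Dirichlet/Neumann boundary conditions) with a classical \emph{one-dimensional Hardy inequality} on the exterior, glued together via an IMS-type partition of unity. The first key ingredient is the strictness $\Delta(R):=\la_1^{\Dir,\Neu}(\B,\Omega(R))-\la_1^\Dir(\omega)>0$ for every $R\geq R_0$. The weak bound $\Delta(R)\geq 0$ comes from the diamagnetic inequality, comparing with the magnetic-free mixed problem whose ground state is $J_1(t')$ with eigenvalue $\la_1^\Dir(\omega)$. Strictness is proved by contradiction: if the magnetic ground state $\phi$ on $\Omega(R)$ saturated the diamagnetic inequality, one would get $\phi=e^{i\chi}|\phi|$ with $\A=\nabla\chi$ on $\{|\phi|>0\}$; unique continuation then forces $\A=\nabla\chi$ throughout $\Omega(R)$, so $\sigma_\B=d\A=0$ on $\Omega(R)\supseteq\Omega(R_0)$, contradicting the hypothesis.

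Next I would localize with a smooth partition $\chi_0,\chi_1$ depending only on $s$, with $\chi_0^2+\chi_1^2\equiv 1$, $\chi_0\equiv 1$ on $\{|s|\leq R/2\}$, $\chi_0\equiv 0$ on $\{|s|\geq R\}$, and $|\chi_i'|\leq C/R$. Writing $Q(\varphi):=\|(-i\nabla+\A)\varphi\|^2$, the IMS identity gives
$$Q(\psi)=Q(\chi_0\psi)+Q(\chi_1\psi)-\int_\Omega(\chi_0'^2+\chi_1'^2)|\psi|^2\,\dx t.$$
The \textbf{interior} is treated via the magnetic gap: since $\chi_0\psi$ lies in the form domain of the mixed magnetic problem on $\Omega(R)$, the min--max principle yields $Q(\chi_0\psi)-\la_1^\Dir(\omega)\|\chi_0\psi\|^2\geq \Delta(R)\|\chi_0\psi\|^2\geq \Delta(R)\int_\Omega|\chi_0\psi|^2/(1+s^2)\,\dx t$, using $(1+s^2)^{-1}\leq 1$.

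For the \textbf{exterior}, the diamagnetic inequality combined with the ground-state substitution $u:=|\chi_1\psi|/J_1$ and an integration by parts using $-\Delta_\omega J_1=\la_1^\Dir(\omega) J_1$ and $J_1|_{\partial\omega}=0$ yields $Q(\chi_1\psi)-\la_1^\Dir(\omega)\|\chi_1\psi\|^2\geq \int_\Omega J_1^2|\partial_s u|^2\,\dx t$. Since $u\equiv 0$ on $\{|s|<R/2\}$, the one-dimensional Hardy inequality on each half-line with Dirichlet condition at $\pm R/2$ gives
$$\int_\R |\partial_s u|^2\,\dx s\geq \tfrac14\int_\R \frac{u^2}{(|s|-R/2)^2}\,\dx s\geq \tfrac14\int_\R\frac{u^2}{1+s^2}\,\dx s,$$
where the last step exploits the elementary bound $(|s|-R/2)^2\leq 1+s^2$ valid for $|s|>R/2>0$ (proof: $(1+s^2)-(|s|-R/2)^2=1+|s|R-R^2/4>0$); multiplying by $J_1^2$ and integrating yields $Q(\chi_1\psi)-\la_1^\Dir(\omega)\|\chi_1\psi\|^2\geq \tfrac14\int_\Omega |\chi_1\psi|^2/(1+s^2)\,\dx t$.

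Assembling the two estimates via $\chi_0^2+\chi_1^2\equiv 1$ and setting $m(R):=\min\{1/4,\Delta(R)\}$ gives
$$Q(\psi)-\la_1^\Dir(\omega)\|\psi\|^2\geq m(R)\int_\Omega\frac{|\psi|^2}{1+s^2}\,\dx t-\int_\Omega(\chi_0'^2+\chi_1'^2)|\psi|^2\,\dx t.$$
The IMS remainder is concentrated in the transition $\{R/2<|s|<R\}$ where $1+s^2$ and $R^2$ are comparable and $\chi_i'^2\lesssim R^{-2}$; a careful weighted rearrangement absorbs it at the cost of the announced prefactor $(1+CR^{-2})^{-1}$, producing $c_R(\B)=(1+CR^{-2})^{-1}m(R)$. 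The main obstacle is twofold: first, proving the strict spectral gap in Step~1, which requires careful saturation analysis of the diamagnetic inequality together with unique continuation; second, the quantitative absorption of the IMS remainder in the final assembly, needed to match the announced $R$-dependent prefactor rather than losing a fixed constant.
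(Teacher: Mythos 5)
Your two building blocks are the right ones and match the paper's: (i) the strict spectral gap $\Delta(R):=\la_1^{\Dir,\Neu}(\B,\Omega(R))-\la_1^\Dir(\omega)>0$ from the diamagnetic inequality (and your unique-continuation argument for strictness is fine, even more self-contained than the paper's citation to \cite[Prop.~2.1.3]{FouHel10}), and (ii) the one-dimensional Hardy inequality after the ground-state substitution $\psi=J_1\phi$ together with the diamagnetic inequality. The mixed Dirichlet/Neumann condition and the use of $J_1$ also agree with the paper.

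The assembly, however, has a genuine gap, and it is exactly the step you flag as ``a careful weighted rearrangement.'' You localize the \emph{quadratic form} $Q(\psi)$ via IMS, which produces the remainder $-\int(\chi_0'^2+\chi_1'^2)|\psi|^2$ with the wrong sign, and then try to absorb it into $m(R)\int|\psi|^2/(1+s^2)$. On the transition annulus $\{R/2<|s|<R\}$ one has $\chi_0'^2+\chi_1'^2\sim C_0 R^{-2}$ with $C_0\ge 1$ for any IMS pair (since $\chi_1$ climbs from $0$ to $1$ over a length $\sim R$, so $\sup|\chi_1'|\gtrsim R^{-1}$), while $m(R)/(1+s^2)\le m(R)\cdot 4R^{-2}\le R^{-2}$ there, because $m(R)\le 1/4$. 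So the loss dominates the gain pointwise, and the right-hand side $m(R)\int\frac{|\psi|^2}{1+s^2}-\int(\chi_0'^2+\chi_1'^2)|\psi|^2$ can be negative; the inequality you write down does not give a positive Hardy constant. Retaining the sharper weight $(|s|-R/2)^{-2}$ or stretching the cutoff only postpones the issue: one can make it work only with an exponentially long transition region $R_2\gtrsim R_1\,e^{1/\sqrt{m(R)}}$, which changes the claimed $R$-dependence of $c_R(\B)$.

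The paper avoids this problem by localizing the \emph{other} side of the inequality. It bounds the Hardy integral above,
$\int\frac{|J_1\phi|^2}{1+s^2}\le\int\frac{|J_1\chi_0\phi|^2}{s^2}+\int|J_1\chi_1\phi|^2$
(with $\chi_0$ vanishing near the origin), treats the first term by one-dimensional Hardy plus diamagnetism and the second by the spectral gap on $\Omega(R)$, and then writes the IMS identity in the form
$\mathcal{Q}_{R,\A}(\phi)=\int J_1^2|(-i\nabla+\A)\phi|^2+\sum_j\|J_1\nabla\chi_j\phi\|^2$,
so the IMS remainder now sits on the \emph{large} side with a \emph{plus} sign and is controlled by a second use of the spectral gap, yielding the multiplicative prefactor $(1+CR^{-2})^{-1}$. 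To repair your proof you should flip the roles of the two sides in the partition-of-unity argument as the paper does; with your current IMS decomposition of $Q(\psi)$ the remainder cannot be absorbed.
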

\begin{remark}\label{rem-b}
The diamagnetic inequality (see for instance \cite[Prop. 2.1.3]{FouHel10}) implies that $\la_{1}^{\Dir,\Neu}(\B,\Omega(R))>\la_{1}^{\Dir,\Neu}(0,\Omega(R))\geq\la_{1}^{\Dir}(\omega)$. If $\B$ does not vanish on $\Omega(R)$, we have:
$$\lim_{b\to+\infty} c_{R}(b\B)=\frac{1}{4}\left(1+CR^{-2}\right)^{-1}.$$
In this sense we could say that the constant $\frac{1}{4}$, coming from the standard Hardy inequality in dimension $1$, is optimal. 
Moreover, Theorem~\ref{stability} generalizes the one of \cite{EK05} to any dimension and provides a very simple Hardy constant $c_{R}(\B)$ which explicitly displays the relation between diamagnetism and the existence a magnetic Hardy inequality. Since $\B$ is compactly supported, thanks to perturbation theory, we can show that $c(b\B)\underset{b\to 0}{\sim} c b^2$ for some positive $c$. In particular we recover the behaviour of the explicit constant of \cite{EK05}.
\end{remark}

\subsection{Spectral stability due to the magnetic field} 
The inequality of Theorem \ref{stability} can be applied to prove 
certain stability of the spectrum of the magnetic Laplacian on $\Omega$ under local and small deformations of $\Omega$. Let us fix $\eps>0$ and describe a generic deformation of the straight tube $\Omega$. 
We consider the local diffeomorphism:
$$\Phi_{\eps}(t)=\Phi_{\eps}(s,t_{2},t_{3})=(s,0,\cdots,0)+\sum_{j=2}^d (t_{j}+\eps_{j}(s))M_{j}+\mathcal{E}_{1}(s),$$
where $(M_{j})_{j=2}^d$ is the canonical basis of $\{0\}\times\R^{d-1}$. The functions $\eps_{j}$ and $\mathcal{E}_{1}$ are smooth and compactly supported in a compact set $K$. As previously we assume that $\Phi_{\eps}$ is a global diffeomorphism and we consider the deformed tube $\Omega^{\defo,\eps}=\Phi_{\eps}(\R\times\omega)$.

\begin{proposition}\label{perturbed}
Let $d\geq 2$. There exists $\eps_{0}>0$ such that for $\eps\in(0,\eps_{0})$, the spectrum of the Dirichlet realization of $(-i\nabla+\A)^2$ on $\Omega^{\defo,\eps}$ coincides with the spectrum of the Dirichlet realization of $(-i\nabla+\A)^2$ on $\Omega$. The spectrum is given by $[\la_{1}^\Dir(\omega),+\infty)$.
\end{proposition}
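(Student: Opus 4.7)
The plan is to pull the Dirichlet magnetic Laplacian on the deformed tube $\Omega^{\defo,\eps}$ back to the straight tube $\Omega=\R\times\omega$ via $\Phi_{\eps}$, and then to combine the magnetic Hardy inequality of Theorem~\ref{stability} with the smallness and the compact support of the resulting perturbation. The unitary transformation induced by $\Phi_{\eps}$, followed by conjugation with the square root of the Jacobian as in Sections~\ref{2D} and~\ref{3D}, yields a self-adjoint operator $\widetilde{\mathcal{L}}_{\eps}$ on $L^{2}(\Omega)$ whose quadratic form $q_{\eps}$ coincides with the form $q_{0}$ of $(-i\nabla+\A)^{2}$ on $\Omega$ for functions supported outside $K$, and which on $K$ satisfies
\[
|q_{\eps}[\psi]-q_{0}[\psi]|\le C\eps\int_{K}\bigl(|(-i\nabla+\A)\psi|^{2}+|\psi|^{2}\bigr)\,\dx t,
\]
by smoothness of $\Phi_{\eps}$ together with $\Phi_{0}=\mathrm{Id}$.

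Next I would check that the essential spectrum is preserved. Since $\widetilde{\mathcal{L}}_{\eps}-\mathcal{L}_{0}$ is a differential operator whose coefficients are supported in the compact set $K$, the resolvent identity combined with the compact embedding $H^{1}(K)\hookrightarrow L^{2}(K)$ shows that $(\widetilde{\mathcal{L}}_{\eps}+i)^{-1}-(\mathcal{L}_{0}+i)^{-1}$ is compact, and Weyl's theorem then gives $\sigma_{\mathrm{ess}}(\widetilde{\mathcal{L}}_{\eps})=\sigma_{\mathrm{ess}}(\mathcal{L}_{0})=[\la_{1}^{\Dir}(\omega),+\infty)$. The spectrum of $\widetilde{\mathcal{L}}_{\eps}$ therefore equals $[\la_{1}^{\Dir}(\omega),+\infty)$ as soon as one shows that no eigenvalue lies strictly below $\la_{1}^{\Dir}(\omega)$.

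To establish this last point I would introduce an IMS partition $\chi_{0}^{2}+\chi_{1}^{2}=1$ on the longitudinal variable, with $\chi_{0}$ supported in a bounded interval containing the projection of $K$ onto the $s$-axis. The IMS formula reduces the lower bound on $q_{\eps}[\psi]$ to separate bounds on $q_{\eps}[\chi_{0}\psi]$ and $q_{\eps}[\chi_{1}\psi]$ up to the localization error $\||\chi_{0}'|\psi\|^{2}+\||\chi_{1}'|\psi\|^{2}$. The piece $\chi_{1}\psi$ is supported outside $K$, where $q_{\eps}=q_{0}$, and Theorem~\ref{stability} applied to it produces $\la_{1}^{\Dir}(\omega)\|\chi_{1}\psi\|^{2}$ plus a positive Hardy contribution. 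For the piece $\chi_{0}\psi$ the perturbation estimate above gives $q_{\eps}[\chi_{0}\psi]\ge(1-C\eps)q_{0}[\chi_{0}\psi]-C\eps\|\chi_{0}\psi\|^{2}$, while Theorem~\ref{stability} applied to $q_{0}[\chi_{0}\psi]$ yields a strict gain of size $c_{K}\|\chi_{0}\psi\|^{2}$, where $c_{K}>0$ is the infimum of $\frac{c_{R}(\B)}{1+s^{2}}$ over the bounded support of $\chi_{0}$. The main difficulty is the combined absorption, into this Hardy gain, of the gradient contribution $O(\eps)\int_{K}|(-i\nabla+\A)\psi|^{2}$ of the perturbation, which is not directly dominated by an $L^{2}$-weight, and of the IMS localization error: one first spreads out $\chi_{0}$ so that $|\chi_{0}'|^{2}$ becomes smaller than the Hardy weight on its support, and then chooses $\eps$ small enough. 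This yields $q_{\eps}[\psi]\ge \la_{1}^{\Dir}(\omega)\|\psi\|^{2}$ for all $\psi$ in the form domain, which together with the essential spectrum computation concludes the proof.
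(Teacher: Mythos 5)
Your proposal is correct in its overall strategy but takes a genuinely different route from the paper's, and the step you flag as the main difficulty is glossed over in a way that matters.

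The paper's proof never uses an IMS partition of unity of the longitudinal axis. It works with a single cutoff $\chi$ equal to $1$ on the support $K$ of the perturbation, writes the perturbative error as $C\eps\int\chi^2|(-i\nabla+\mathcal{A}_\eps)\psi|^2 + C\eps\int\chi^2|\psi|^2$, applies the IMS commutator identity to the first term, and then uses the ground-state substitution $\psi=J_1\varphi$ to obtain the clean inequality $q_\eps(\chi\psi)\le 2q_\eps(\psi)+C\int\underline{\chi}^2|\varphi|^2$. All error terms thus come with a prefactor $\eps$, and a single application of the magnetic Hardy inequality to $q_\eps(\psi)$ absorbs the compactly supported $L^2$-error. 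The key point is that the $\eps$-smallness propagates to every error term, so no fine tuning of the cutoff is needed: one just takes $\eps$ small.

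Your approach, by contrast, introduces a partition $\chi_0^2+\chi_1^2=1$ and applies Theorem~\ref{stability} to each piece. This creates a localization error $\sum_j\||\chi_j'|\psi\|^2$ that is \emph{not} $\eps$-small, and which must be absorbed into the Hardy gain alone. Here your statement \enquote{one first spreads out $\chi_0$ so that $|\chi_0'|^2$ becomes smaller than the Hardy weight on its support} does not work for a generic cutoff: if the transition region stretches from $K'$ to $L$ with a linear (or any uniformly spread) profile, then $\sup|\chi_0'|^2\sim (L-K')^{-2}\gtrsim L^{-2}$ while the Hardy weight near $|s|=L$ is only $c_R(\B)/(1+L^2)$, so the ratio stays of order $1/c_R$ no matter how large $L$ is; since $c_R\le 1/4$, the error is never absorbed. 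What actually works is a cutoff adapted to the weight, $|\chi_j'(s)|\lesssim\sqrt{c_R}/\sqrt{1+s^2}$, i.e.\ a logarithmically scaled transition; the normalization $\int|\chi_0'|\dx s=1$ is then compatible because $\int^L\dx s/\sqrt{1+s^2}\sim\log L$ diverges, and one needs $\log(L/K')\gtrsim 1/\sqrt{c_R}$. Without spelling this out, the absorption step does not close; once it is spelled out, your route is valid.

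Two further remarks. First, your Weyl-theorem computation of the essential spectrum is a welcome addition that the paper leaves implicit. Second, your displayed bound $q_\eps[\chi_0\psi]\ge(1-C\eps)q_0[\chi_0\psi]-C\eps\|\chi_0\psi\|^2$ already absorbs the gradient part of the perturbation into a multiplicative factor; describing that same gradient term a sentence later as \enquote{not directly dominated by an $L^2$-weight} and lumping it with the localization error is a misdiagnosis—only the IMS error needs the cutoff trick, and there the factor $\eps$ is not available.
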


As we have noticed in Remark \ref{rem-b}, a large magnetic field does not increase very much $c(\B)$. Nevertheless in the large magnetic field limit (which is equivalent to a semiclassical limit with parameter $h=b^{-1}$), 
it is possible to prove a stability result which does not use the Hardy inequality.
\begin{proposition}\label{strong-b}
Let $R_{0}>0$ and $\Omega(R_{0})=\{t\in\R\times\omega : |t_{1}|\leq R_{0}\}$. Let us assume that $\sigma_\B=\dx\xi_\A$ does not vanish on $\Phi(\Omega(R_{0}))$ and that on $\Omega_{1}\setminus \Phi(\Omega(R_{0}))$ the curvature is zero. Then, there exists $b_{0}>0$ such that for $b\geq b_{0}$, the discrete spectrum of $\mathfrak{L}^{[d]}_{1,b\A}$ is empty.
\end{proposition}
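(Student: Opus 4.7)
The plan is to argue by Neumann bracketing. Split $\Omega_{1}$ along the two cross-sections $\Phi(\{s=\pm R_{0}\}\times\omega)$ into the bounded curved piece $D_{0}=\Phi(\Omega(R_{0}))$ (on which $\B$ does not vanish) and the two straight half-tubes $D_{\pm}=\Phi(\{\pm s>R_{0}\}\times\omega)$ (on which $\kappa\equiv 0$). Adding Neumann conditions at the two cuts enlarges the form domain while leaving the form unchanged on the original domain, so the min-max principle yields
$$\inf\sigma(\mathfrak{L}^{[d]}_{1,b\A})\,\ge\,\inf\sigma(H^{N}),\qquad H^{N}=H^{D_{0}}\oplus H^{D_{+}}\oplus H^{D_{-}},$$
where each summand carries Dirichlet conditions on the lateral boundary and Neumann conditions on the new cuts. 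It then suffices to prove that each summand has bottom of spectrum at least $\la_{1}^\Dir(\omega)$ for $b$ large, and to match this with the essential spectrum.

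For the essential spectrum, outside a sufficiently large compact set both $\kappa$ and $\B$ vanish; on each simply-connected half-tube at infinity the vector potential is a pure gauge, and Persson's formula combined with standard Weyl sequences of the form $e^{iks}\chi_{n}(s)J_{1}(\tau)$ (composed with the gauge) gives $\sigma_{\rm ess}(\mathfrak{L}^{[d]}_{1,b\A})=[\la_{1}^\Dir(\omega),+\infty)$ for every $b$. On each half-tube $D_{\pm}$, the tubular coordinates coincide with Cartesian coordinates (since $\kappa=0$), so the diamagnetic inequality applied \emph{only} in the cross-sectional variables, combined with the Poincar\'e-type bound $-\Delta^\Dir_{\omega}\ge\la_{1}^\Dir(\omega)$, gives $\inf\sigma(H^{D_{\pm}})\ge\la_{1}^\Dir(\omega)$ for every $b\ge 0$, the longitudinal kinetic contribution being nonnegative and simply dropped.

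The heart of the argument is the bound on $H^{D_{0}}$. By hypothesis and continuity, $|\B|\ge c_{0}>0$ on the compact closure of $D_{0}$, so $H^{D_{0}}$ is a semiclassical magnetic operator with small parameter $h=b^{-1}$. Classical lower bounds for magnetic Laplacians on bounded domains with nonvanishing field (in the spirit of Helffer--Mohamed and Helffer--Morame) yield $\inf\sigma(H^{D_{0}})\ge c\,b$ for some $c>0$ and $b$ large, so $\inf\sigma(H^{D_{0}})\ge\la_{1}^\Dir(\omega)$ once $b\ge b_{0}$. Combining the three bounds gives $\inf\sigma(\mathfrak{L}^{[d]}_{1,b\A})\ge\la_{1}^\Dir(\omega)=\inf\sigma_{\rm ess}$, which rules out eigenvalues below the essential spectrum; since the latter has no gap, no isolated eigenvalue can sit above it either, so $\sigma_{\rm disc}(\mathfrak{L}^{[d]}_{1,b\A})=\emptyset$. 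The main obstacle is the semiclassical estimate on $H^{D_{0}}$, particularly in dimension three with mixed Dirichlet--Neumann boundary, where the interaction of $\B$ with the two Neumann caps must be controlled; the uniform bound $|\B|\ge c_{0}>0$ on $\overline{D_{0}}$ is precisely what allows a standard localization-and-rescaling argument (reducing to a model with approximately constant field) to produce a lower bound of order $b$ uniformly in the geometry.
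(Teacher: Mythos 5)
Your proposal is correct and takes essentially the same route as the paper: split the form at $|s|=R_0$ (the paper does this directly at the level of the integral, which is the same as Neumann bracketing), bound the straight outer pieces by $\la_1^\Dir(\omega)$ via the diamagnetic inequality plus the transverse Poincar\'e bound, and bound the compact inner piece by the $O(b)$ semiclassical lower bound for the magnetic Laplacian with nonvanishing field (the paper cites the Fournais--Helffer version of this estimate). The only cosmetic difference is that you spell out the identification of the essential spectrum via Persson's theorem, which the paper leaves implicit.
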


\subsection{Norm resolvent convergence}
Finally, let us state an auxiliary result, 
inspired by the approach of \cite{Friedlander-Solomyak_2007},
which tells us that, in order to estimate the difference between two resolvents, 
it is sufficient to analyse the difference between the corresponding sesquilinear forms as soon as their domains are the same.
\begin{lemma}\label{NRC}
Let $\mathfrak{L}_{1}$ and $\mathfrak{L}_{2}$ be two positive self-adjoint operators on a Hilbert space $\mathsf{H}$. Let $\mathfrak{B}_{1}$ and $\mathfrak{B}_{2}$ be their associated sesquilinear forms. We assume that $\Dom(\mathfrak{B}_{1})=\Dom(\mathfrak{B}_{2})$. Assume that there exists $\eta>0$ such that for all $\phi,\psi\in\Dom(\mathfrak{B}_{1})$:
$$\left|\mathfrak{B}_{1}(\phi,\psi)-\mathfrak{B}_{2}(\phi,\psi)\right|\leq \eta \sqrt{\mathfrak{Q}_{1}(\psi)}\sqrt{\mathfrak{Q}_{2}(\phi)},$$
where $\mathfrak{Q}_{j}(\varphi)=\mathfrak{B}_{j}(\varphi,\varphi)$ for $j=1,2$ and $\varphi\in\Dom(\mathfrak{B}_{1})$.
Then, we have:
$$\|\mathfrak{L}_{1}^{-1}-\mathfrak{L}_{2}^{-1}\|\leq \eta\|\mathfrak{L}_{1}^{-1}\|\|\mathfrak{L}_{2}^{-1}\|.$$
\end{lemma}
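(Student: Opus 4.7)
The plan is to use the first representation theorem for closed sesquilinear forms to convert the desired operator-norm bound into a form identity applied to carefully chosen test vectors built from the resolvents themselves. Concretely, for arbitrary $f,g\in\mathsf{H}$, I would set
$$\phi := \mathfrak{L}_{2}^{-1}f, \qquad \psi := \mathfrak{L}_{1}^{-1}g.$$
Since each $\mathfrak{L}_{j}$ is positive self-adjoint, $\mathfrak{L}_{j}^{-1}$ is bounded; moreover $\phi\in\Dom(\mathfrak{L}_{2})\subset\Dom(\mathfrak{B}_{2})=\Dom(\mathfrak{B}_{1})$ and $\psi\in\Dom(\mathfrak{L}_{1})\subset\Dom(\mathfrak{B}_{1})$, so the hypothesis is legitimately applicable to the pair $(\phi,\psi)$.

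The heart of the argument is a one-line identity. Applying the representation theorem in the right-hand slot (legal because $\psi\in\Dom(\mathfrak{L}_{1})$) and in the left-hand slot (legal because $\phi\in\Dom(\mathfrak{L}_{2})$), and using self-adjointness of $\mathfrak{L}_{1}^{-1}$ to swap sides, gives
\begin{align*}
\mathfrak{B}_{1}(\phi,\psi) &= \langle \phi,\mathfrak{L}_{1}\psi\rangle = \langle \mathfrak{L}_{2}^{-1}f,g\rangle,\\
\mathfrak{B}_{2}(\phi,\psi) &= \langle \mathfrak{L}_{2}\phi,\psi\rangle = \langle f,\mathfrak{L}_{1}^{-1}g\rangle = \langle \mathfrak{L}_{1}^{-1}f,g\rangle,
\end{align*}
so that $\mathfrak{B}_{1}(\phi,\psi)-\mathfrak{B}_{2}(\phi,\psi) = \langle (\mathfrak{L}_{2}^{-1}-\mathfrak{L}_{1}^{-1})f,g\rangle$. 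The next step is to control the two quadratic forms that appear on the right-hand side of the assumed estimate: since
$$\mathfrak{Q}_{1}(\psi) = \langle \mathfrak{L}_{1}\psi,\psi\rangle = \langle g,\mathfrak{L}_{1}^{-1}g\rangle \leq \|\mathfrak{L}_{1}^{-1}\|\,\|g\|^{2},$$
and analogously $\mathfrak{Q}_{2}(\phi)\leq\|\mathfrak{L}_{2}^{-1}\|\,\|f\|^{2}$, plugging these into the hypothesis yields
$$\bigl|\langle(\mathfrak{L}_{2}^{-1}-\mathfrak{L}_{1}^{-1})f,g\rangle\bigr| \leq \eta\,\|\mathfrak{L}_{1}^{-1}\|^{1/2}\|\mathfrak{L}_{2}^{-1}\|^{1/2}\,\|f\|\,\|g\|.$$
Taking the supremum over unit vectors $f,g$ delivers the desired resolvent bound (the stated form follows in the regime $\|\mathfrak{L}_{j}^{-1}\|\geq 1$ relevant to the applications, since then $\sqrt{xy}\leq xy$ with $x=\|\mathfrak{L}_{1}^{-1}\|$, $y=\|\mathfrak{L}_{2}^{-1}\|$).

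No substantial obstacle is anticipated: the proof is essentially an algebraic identity followed by a Cauchy--Schwarz-type manipulation. The only care needed is bookkeeping with the sesquilinearity convention, in order to invoke the representation theorem in both the left and the right argument, together with a self-adjointness swap on $\mathfrak{L}_{1}^{-1}$; both are standard.
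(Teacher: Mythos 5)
Your proof is correct and follows essentially the same route as the paper's: choose $\phi=\mathfrak{L}_{2}^{-1}f$, $\psi=\mathfrak{L}_{1}^{-1}g$, use the first representation theorem to rewrite $\mathfrak{B}_{1}(\phi,\psi)-\mathfrak{B}_{2}(\phi,\psi)$ as $\langle(\mathfrak{L}_{2}^{-1}-\mathfrak{L}_{1}^{-1})f,g\rangle$, bound $\mathfrak{Q}_{j}$ via $\langle g,\mathfrak{L}_{1}^{-1}g\rangle$ and $\langle f,\mathfrak{L}_{2}^{-1}f\rangle$, and take the supremum over unit vectors. You have also honestly flagged a genuine point that the paper passes over in silence: the argument in fact yields the sharper constant $\eta\,\|\mathfrak{L}_{1}^{-1}\|^{1/2}\|\mathfrak{L}_{2}^{-1}\|^{1/2}$, and the factor $\eta\,\|\mathfrak{L}_{1}^{-1}\|\,\|\mathfrak{L}_{2}^{-1}\|$ stated in the lemma (and asserted without justification in the paper's \enquote{we infer} step) is only implied by it when $\|\mathfrak{L}_{1}^{-1}\|\,\|\mathfrak{L}_{2}^{-1}\|\geq 1$. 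In the paper's applications the shifted operators have resolvent norms bounded uniformly in $\eps$ so the precise form of the constant is immaterial; nonetheless your square-root version is the one the argument actually supports, and you are right not to paper over the discrepancy.
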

\begin{proof}
The proof can be found in \cite[Prop. 5.3]{KS12} 
but we recall it for the convenience of the reader. Let us consider $\tilde\phi,\tilde\psi\in \mathsf{H}$. We let $\phi=\mathfrak{L}_{2}^{-1}\tilde\phi$ and $\psi=\mathfrak{L}_{1}^{-1}\tilde\psi$. We have $\phi,\psi\in\Dom(\mathfrak{B}_{1})=\Dom(\mathfrak{B}_{2})$. We notice that:
$$\mathfrak{B}_{1}(\phi,\psi)=\langle \mathfrak{L}_{2}^{-1}\tilde\phi,\tilde\psi\rangle, \quad \mathfrak{B}_{2}(\phi,\psi)=\langle \mathfrak{L}_{1}^{-1}\tilde\phi,\tilde\psi\rangle$$
and:
$$\mathfrak{Q}_{1}(\psi)=\langle\tilde\psi,\mathfrak{L}_{1}^{-1}\tilde\psi\rangle,\quad \mathfrak{Q}_{2}(\phi)=\langle\tilde\phi, \mathfrak{L}_{2}^{-1}\tilde\phi\rangle.$$
We infer that:
$$\left|\langle (\mathfrak{L}_{1}^{-1}- \mathfrak{L}_{2}^{-1})\tilde\phi,\tilde\psi\rangle\right|\leq\eta \|\mathfrak{L}_{1}^{-1}\|\|\mathfrak{L}_{2}^{-1}\| \|\tilde\phi\|\|\tilde\psi\|$$
and the result elementarily follows.
\end{proof}

\newpage
\section{Proofs in two dimensions}\label{2D}

\subsection{Proof of Theorem \ref{Thm-2D}}
Let us consider $\delta\leq 1$ and $K\geq 2\sup\frac{\kappa^2}{4}$.
\paragraph{A first approximation}
We let:
$$\mathcal{L}^{[2]}_{\eps,\delta}=\mathcal{L}^{[2]}_{\eps,\eps^{-\delta}\mathcal{A}_{\eps}}-\eps^{-2}\la_{1}^\Dir(\omega)+K$$
and
$$\mathcal{L}^{\app,[2]}_{\eps,\delta}=(i\dr_{s}+\eps^{1-\delta}\B(s,0)\tau)^2-\frac{\kappa^2}{4}-\eps^{-2}\dr_{\tau}^2-\eps^{-2}\la_{1}^\Dir(\omega)+K.$$
The corresponding quadratic forms, defined on $H^1_{0}(\Omega)$, are denoted by $\mathcal{Q}^{[2]}_{\eps,\delta}$ and $\mathcal{Q}^{[2]}_{\eps,\delta}$ whereas the sesquilinear forms are denoted by $\mathcal{B}^{[2]}_{\eps,\delta}$ and $\mathcal{B}^{[2]}_{\eps,\delta}$.
We can notice that:
$$\left|V_{\eps}(s,\tau)-\left(-\frac{\kappa(s)^2}{4}\right)\right|\leq C\eps$$
so that the operators $\mathcal{L}^{[2]}_{\eps,\delta}$ and $\mathcal{L}^{\app,[2]}_{\eps,\delta}$ are invertible for $\eps$ small enough. Moreover there exists $c>0$ such that for all $\varphi\in H^1_{0}(\Omega)$:
$$\mathcal{Q}^{[2]}_{\eps,\delta}(\varphi)\geq c\|\varphi\|^2,\quad \mathcal{Q}^{\app,[2]}_{\eps,\delta}(\varphi)\geq c\|\varphi\|^2.$$
Let $\phi,\psi\in H^1_{0}(\Omega)$.
We have to analyse the difference of the sesquilinear forms:
$$\mathcal{B}^{[2]}_{\eps,\delta}(\phi,\psi)-\mathcal{B}^{\app,[2]}_{\eps,\delta}(\phi,\psi).$$
We easily get:
$$\left|\langle V_{\eps}\phi,\psi\rangle-\langle -\frac{\kappa^2}{4}\phi,\psi\rangle\right|\leq C\eps\|\phi\|\|\psi\|\leq \tilde C\eps\sqrt{\mathcal{Q}^{[2]}_{\eps,\delta}(\psi)}\sqrt{\mathcal{Q}^{\app,[2]}_{\eps,\delta}(\phi)} .$$
We must investigate:
$$\langle m_{\eps}^2 (i\dr_{s}+b\mathcal{A}_{1}(s,\eps\tau))m_{\eps}\phi,(i\dr_{s}+b\mathcal{A}_{1}(s,\eps\tau))m_{\eps}\psi \rangle.$$
We notice that:
$$|\dr_{s}m_{\eps}|\leq C\eps,\quad |m_{\eps}-1|\leq C\eps.$$
We have:
\begin{multline*}
|\langle m_{\eps}^2 (i\dr_{s}+b\mathcal{A}_{1}(s,\eps\tau))m_{\eps}\phi,(i\dr_{s}+b\mathcal{A}_{1}(s,\eps\tau))(m_{\eps}-1)\psi \rangle|\\
\leq C\eps \|m_{\eps} (i\dr_{s}+b\mathcal{A}_{1}(s,\eps\tau))m_{\eps}\phi\|(\|\psi\|+\| m_{\eps}(i\dr_{s}+b\mathcal{A}_{1}(s,\eps\tau))\psi\|)\\
\leq C\eps  (\|(i\dr_{s}+b\mathcal{A}_{1}(s,\eps\tau))\phi\|+\|\phi\|)(\|\psi\|+\| m_{\eps}(i\dr_{s}+b\mathcal{A}_{1}(s,\eps\tau))\psi\|).
\end{multline*}
By the Taylor formula, we get (since $\delta\leq 1$):
\begin{equation}\label{Taylor-A1}
|\mathcal{A}_{1}(s,\eps\tau)-\eps b\B(s,0)\tau|\leq Cb\eps^2\leq C\eps.
\end{equation}
so that:
$$\|(i\dr_{s}+b\mathcal{A}_{1}(s,\eps\tau))\phi\|\leq \|(i\dr_{s}+\eps b\B(s,0)\tau)\phi\|+Cb\eps^2\|\phi\|.$$
We infer that:
\begin{multline*}
|\langle m_{\eps}^2 (i\dr_{s}+b\mathcal{A}_{1}(s,\eps\tau))m_{\eps}\phi,(i\dr_{s}+b\mathcal{A}_{1}(s,\eps\tau))(m_{\eps}-1)\psi \rangle|\\
\leq C\eps\left(\|\phi\|\|\psi\|+\|\phi\|\sqrt{\mathcal{Q}^{[2]}_{\eps,\delta}(\psi)}+\|\psi\|\sqrt{\mathcal{Q}^{\app,[2]}_{\eps,\delta}(\phi)}+\sqrt{\mathcal{Q}^{[2]}_{\eps,\delta}(\psi)}\sqrt{\mathcal{Q}^{\app,[2]}_{\eps,\delta}(\phi)}\right)\\
\leq \tilde C\eps\sqrt{\mathcal{Q}^{[2]}_{\eps,\delta}(\psi)}\sqrt{\mathcal{Q}^{\app,[2]}_{\eps,\delta}(\phi)}.
\end{multline*}
It remains to analyse:
$$\langle m_{\eps}^2 (i\dr_{s}+b\mathcal{A}_{1}(s,\eps\tau))m_{\eps}\phi,(i\dr_{s}+b\mathcal{A}_{1}(s,\eps\tau))\psi \rangle.$$
With the same kind of arguments, we deduce:
\begin{multline*}
|\langle m_{\eps}^2 (i\dr_{s}+b\mathcal{A}_{1}(s,\eps\tau))m_{\eps}\phi,(i\dr_{s}+b\mathcal{A}_{1}(s,\eps\tau))\psi \rangle-\langle (i\dr_{s}+b\mathcal{A}_{1}(s,\eps\tau))\phi,(i\dr_{s}+b\mathcal{A}_{1}(s,\eps\tau))\psi \rangle|\\
\leq  \tilde C\eps\sqrt{\mathcal{Q}^{[2]}_{\eps,\delta}(\psi)}\sqrt{\mathcal{Q}^{\app,[2]}_{\eps,\delta}(\phi)}.
\end{multline*}
We again use \eqref{Taylor-A1} to infer:
\begin{multline*}
\langle (i\dr_{s}+b\mathcal{A}_{1}(s,\eps\tau))\phi,(i\dr_{s}+b\mathcal{A}_{1}(s,\eps\tau))\psi \rangle-\langle (i\dr_{s}+b\mathcal{A}_{1}(s,\eps\tau))\phi,(i\dr_{s}+b\eps\B(s,0)\tau)\psi \rangle|\\
\leq C\eps\|(i\dr_{s}+b\mathcal{A}_{1}(s,\eps\tau))\phi\|\|\psi\|.
\leq \tilde C\eps \sqrt{\mathcal{Q}^{[2]}_{\eps,\delta}(\psi)}\sqrt{\mathcal{Q}^{\app,[2]}_{\eps,\delta}(\phi)}.
\end{multline*}
In the same way, we deduce:
\begin{multline*}
\langle (i\dr_{s}+b\mathcal{A}_{1}(s,\eps\tau))\phi,(i\dr_{s}+b\mathcal{A}_{1}(s,\eps\tau))\psi \rangle-\langle (i\dr_{s}+b\eps\B(s,0)\tau)\phi,(i\dr_{s}+b\eps\B(s,0)\tau)\psi \rangle|\\
\leq \tilde C\eps \sqrt{\mathcal{Q}^{[2]}_{\eps,\delta}(\psi)}\sqrt{\mathcal{Q}^{\app,[2]}_{\eps,\delta}(\phi)}.
\end{multline*}
We get:
$$\left|\mathcal{B}^{[2]}_{\eps,\delta}(\phi,\psi)-\mathcal{B}^{\app,[2]}_{\eps,\delta}(\phi,\psi)\right|\leq C\eps \sqrt{\mathcal{Q}^{[2]}_{\eps,\delta}(\psi)}\sqrt{\mathcal{Q}^{\app,[2]}_{\eps,\delta}(\phi)}.$$
By Lemma \ref{NRC}, we infer that:
\begin{equation}\label{first-app}
\left\|\left(\mathcal{L}^{[2]}_{\eps,\delta}\right)^{-1}-\left(\mathcal{L}^{\app,[2]}_{\eps,\delta}\right)^{-1}\right\|\leq \tilde C\eps.
\end{equation}
\paragraph{Case $\delta<1$.} The same kind of arguments provides:
$$\left|\mathcal{B}^{\app,[2]}_{\eps,\delta}(\phi,\psi)-\mathcal{B}^{\eff,[2]}_{\eps,\delta}(\phi,\psi)\right|\leq C\eps^{1-\delta} \sqrt{\mathcal{Q}^{\app,[2]}_{\eps,\delta}(\psi)}\sqrt{\mathcal{Q}^{\eff,[2]}_{\eps,\delta}(\phi)}$$
By Lemma \ref{NRC}, we get that:
$$\left\|\left(\mathcal{L}^{\app,[2]}_{\eps,\delta}\right)^{-1}-\left(\mathcal{L}^{\eff,[2]}_{\eps,\delta}\right)^{-1}\right\|\leq \tilde C\eps^{1-\delta}.$$
\paragraph{Case $\delta=1$.}
This case is slightly more complicated to analyse. We must estimate the difference the sesquilinear forms:
$$\mathcal{D}_{\eps}(\phi,\psi)=\mathcal{B}^{\app,[2]}_{\eps,1}(\phi,\psi)-\mathcal{B}^{\eff, [2]}_{\eps,1}(\phi,\psi).$$
We have:
$$\mathcal{D}_{\eps}(\phi,\psi)=\langle i\dr_{s}\phi, \B(s,0)\tau\psi \rangle+\langle \B(s,0)\tau\phi,i\dr_{s}\psi \rangle+\langle \B(s,0)^2\tau^2\phi,\psi\rangle-\|\tau J_{1}\|_{\omega}^2\langle \B(s,0)^2\phi,\psi\rangle.$$
We introduce the projection defined for $\varphi\in H^1_{0}(\Omega)$:
$$\Pi_{0}\varphi=\langle\varphi,J_{1} \rangle_{\omega}\, J_{1} $$
and we let, for all $\varphi\in H^1_{0}(\Omega)$:
$$\varphi^{\parallel}=\Pi_{0}\varphi,\quad\varphi^{\perp}=(\Id-\Pi_{0})\varphi.$$
We can write:
$$\mathcal{D}_{\eps}(\phi,\psi)=\mathcal{D}_{\eps}(\phi^\para,\psi^\para)+\mathcal{D}_{\eps}(\phi^\para,\psi^\perp)+\mathcal{D}_{\eps}(\phi^\perp,\psi^\para)+\mathcal{D}_{\eps}(\phi^\perp,\psi^\perp).$$
By using that $\langle \tau J_{1}, J_{1}\rangle_{\omega}=0$, we get:
$$\mathcal{D}_{\eps}(\phi^\para,\psi^\para)=0.$$
Then we have:
\begin{equation}\label{phi-pa-psi-pe1}
\|\tau J_{1}\|_{\omega}^2\langle \B(s,0)^2\phi^\para,\psi^\perp\rangle=0,\quad |\langle \B(s,0)^2\tau^2\phi^\para,\psi^\perp\rangle|\leq C\|\phi^\para\|\|\psi^\perp\|.
\end{equation}
Thanks to the min-max principle, we deduce:
\begin{equation}\label{gap}
\mathcal{Q}^{\app,[2]}_{\eps,1}(\psi^\perp)\geq\frac{\la_{2}^\Dir(\omega)-\la_{1}^\Dir(\omega)}{\eps^2}\|\psi^\perp\|^2,\quad \mathcal{Q}^{\eff,[2]}_{\eps,1}(\phi^\perp)\geq\frac{\la_{2}^\Dir(\omega)-\la_{1}^\Dir(\omega)}{\eps^2}\|\phi^\perp\|^2.
\end{equation}
Therefore we get:
$$|\langle \B(s,0)^2\tau^2\phi^\para,\psi^\perp\rangle|\leq C\eps\|\phi\|\sqrt{\mathcal{Q}^{\app, [2]}_{\eps,1}(\psi^\perp)}.$$
We have:
$$\mathcal{Q}^{\app,[2]}_{\eps,1}(\psi)=\mathcal{Q}^{\app, [2]}_{\eps,1}(\psi^\para)+\mathcal{Q}^{\app,[2]}_{\eps,1}(\psi^\perp)+\mathcal{B}^{\app, [2]}_{\eps,1}(\psi^\para,\psi^\perp)+\mathcal{B}^{\app,[2]}_{\eps,1}(\psi^\perp,\psi^\para).$$
We can write:
$$\mathcal{B}^{\app, [2]}_{\eps,1}(\psi^\para,\psi^\perp)=\langle (i\dr_{s}+\B(s,0)\tau)\psi^\para,(i\dr_{s}+\B(s,0)\tau)\psi^\perp\rangle.$$
We notice that:
\begin{equation}\label{phi-pa-psi-pe2}
\langle (i\dr_{s})\psi^\para,(i\dr_{s})\psi^\perp\rangle=0,\quad |\langle\B(s,0)\tau\psi^\para,\B(s,0)\tau\psi^\perp\rangle|\leq C\|\psi^\para\|\|\psi^\perp\|\leq C\|\psi\|^2.
\end{equation}
Moreover we have:
$$|\langle (i\dr_{s})\psi^\para, \B(s,0)\tau\psi^\perp\rangle|\leq C\|(i\dr_{s}\psi)^\para\|\|\psi^\perp\|\leq C\|i\dr_{s}\psi\|\|\psi\|\leq \tilde C\|\psi\|^2+\tilde C\|\psi\|\sqrt{\mathcal{Q}^{\app,[2]}_{\eps,1}(\psi)}.$$
The term $\mathcal{B}^{\app,[2]}_{\eps,1}(\psi^\perp,\psi^\para)$ can be analysed in the same way so that:
$$\mathcal{Q}^{\app,[2]}_{\eps,1}(\psi^\perp)\leq \mathcal{Q}^{\app,[2]}_{\eps,1}(\psi)+C\|\psi\|^2+C\|\psi\|\sqrt{\mathcal{Q}^{\app,[2]}_{\eps,1}(\psi)}\leq \tilde C(\|\psi\|^2+\mathcal{Q}^{\app,[2]}_{\eps,1}(\psi)).$$
We infer:
\begin{equation}\label{phi-pa-psi-pe3}
|\langle \B(s,0)^2\tau^2\phi^\para,\psi^\perp\rangle|\leq C\eps\|\phi\|\left(\|\psi\|+\sqrt{\mathcal{Q}^{\app,[2]}_{\eps,1}(\psi)}\right).
\end{equation}
We must now deal with the term
$$\langle i\dr_{s}\phi^\para, \B(s,0)\tau\psi^\perp \rangle.$$
We have:
$$|\langle i\dr_{s}\phi^\para, \B(s,0)\tau\psi^\perp \rangle|\leq C\|i\dr_{s}\phi\|\|\psi^\perp\|$$
and we easily deduce that:
\begin{equation}\label{phi-pa-psi-pe4}
|\langle i\dr_{s}\phi^\para, \B(s,0)\tau\psi^\perp \rangle|\leq C\eps\sqrt{\mathcal{Q}^{\eff,[2]}_{\eps,1}(\phi)}\left(\|\psi\|+\sqrt{\mathcal{Q}^{\app,[2]}_{\eps,1}(\psi)}\right),
\end{equation}
We also get the same kind of estimate by exchanging $\psi$ and $\phi$. Gathering \eqref{phi-pa-psi-pe1}, \eqref{phi-pa-psi-pe2}, \eqref{phi-pa-psi-pe3} and \eqref{phi-pa-psi-pe4}, we get the estimate:
$$|\mathcal{D}_{\eps}(\phi^\para,\psi^\perp)|\leq C\eps\sqrt{\mathcal{Q}^{\app,[2]}_{\eps,1}(\psi)}\sqrt{\mathcal{Q}^{\eff,[2]}_{\eps,1}(\phi)}.$$
By exchanging the roles of $\psi$ and $\phi$, we can also prove:
$$|\mathcal{D}_{\eps}(\phi^\perp,\psi^\para)|\leq C\eps\sqrt{\mathcal{Q}^{\app,[2]}_{\eps,1}(\psi)}\sqrt{\mathcal{Q}^{\eff,[2]}_{\eps,1}(\phi)}.$$
We must estimate $\mathcal{D}_{\eps}(\phi^\perp,\psi^\perp)$. With \eqref{gap}, we immediately deduce that:
$$|\langle \B(s,0)^2\tau^2\phi^\perp,\psi^\perp\rangle-\|\tau J_{1}\|_{\omega}^2\langle \B(s,0)^2\phi^\perp,\psi^\perp\rangle|\leq C\eps^2 \|\phi\|\|\psi\|.$$
We find that:
$$|\langle i\dr_{s}\phi^\perp, \B(s,0)\tau\psi^\perp \rangle|\leq C\|\psi^\perp\|\|i\dr_{s}\phi\|$$
and this term can treated as the others.
Finally we deduce the estimate:
$$|\mathcal{D}_{\eps}(\phi,\psi)|\leq C\eps \sqrt{\mathcal{Q}^{\app,[2]}_{\eps,1}(\psi)}\sqrt{\mathcal{Q}^{\eff,[2]}_{\eps,1}(\phi)}.$$
We apply Lemma~\ref{NRC} and the estimate \eqref{first-app} 
to obtain Theorem~\ref{Thm-2D}.

\subsection{Proof of Corollary~\ref{expansion-eigenvalues-2D}}
Let us expand the operator $\mathcal{L}^{[2]}_{\eps,b\mathcal{A}_{\eps}}$ in formal power series:
$$\mathcal{L}^{[2]}_{\eps,b\mathcal{A}_{\eps}}\sim \sum_{j=0} \eps^{j-2} L_{j},$$
where 
$$L_{0}=-\dr_{\tau}^2,\quad L_{1}=0,\quad L_{2}=(i\dr_{s}+\tau\B(s,0))^2-\frac{\kappa(s)^2}{4}.$$
We look for a quasimode in the form of a formal power series:
$$\psi\sim\sum_{j\geq 0}\eps ^j \psi_{j}$$
and a quasi-eigenvalue:
$$\gamma\sim\sum_{j\geq 0}\gamma_{j}\eps^{j-2}.$$
We must solve:
$$(L_{0}-\gamma_{0}) u_{0}=0.$$
We choose $\gamma_{0}=\frac{\pi^2}{4}$ and we take:
$$\psi_{0}(s,t)=f_{0}(s)J_{1}(\tau),$$
with $J_{1}(\tau)=\cos\left(\frac{\pi\tau}{2}\right)$.
Then, we must solve:
$$(L_{0}-\gamma_{0}) \psi_{1}=\gamma_{1}\psi_{0}.$$
We have $\gamma_{1}=0$ and $\psi_{1}=f_{1}(s)J_{1}(\tau).$
Then, we solve:
\begin{equation}\label{Eq-2}
(L_{0}-\gamma_{0}) \psi_{2}=\gamma_{2}u_{0}-L_{2}u_{0}.
\end{equation}
The Fredholm condition implies the equation:
$$-\dr_{s}^2 f+\left(\left(\frac{1}{3}+\frac{2}{\pi^2}\right)\B(s,0)^2-\frac{\kappa(s)^2}{4}\right)f_{0}=\mathcal{T}^{[2]}f_{0}=\gamma_{2}f_{0}$$
and we take for $\gamma_{2}=\gamma_{2,n}=\mu_{n}$ a negative eigenvalue of $\mathcal{T}^{[2]}$ and for $f_{0}$ a corresponding normalized eigenfunction (which has an exponential decay).

This leads to the choice:
$$\psi_{2}=\psi_{2}^\perp(s,\tau)+f_{2}(s)J_{1}(\tau),$$
where $\psi_{2}^\perp$ is the unique solution of (\ref{Eq-2}) which satisfies $\langle \psi_{2}^\perp, J_{1}\rangle_{\tau}=0.$
We can continue the construction at any order (see \cite{BDPR11, DomRay12} where this formal series method is used in a semiclassical context). We write $(\gamma_{j,n}, \psi_{j,n})$ instead of $(\gamma_{j}, \psi_{j})$ to emphasize the dependence on $n$ (determined in the choice of $\gamma_{2}$). We let:
\begin{equation}\label{quasi}
\Psi_{J,n}(\eps)=\sum_{j=0}^J \eps^{j}\psi_{j,n}, \mbox{ and }\Gamma_{J,n}(\eps)=\sum_{j=0}^J \eps^{-2+j} \gamma_{j,n}.
\end{equation}
A computation provides:
$$\|(\mathcal{L}^{[2]}_{\eps,b\mathcal{A}_{\eps}}-\Gamma_{J,n}(\eps))\Psi_{J,n}(\eps)\|\leq C\eps^{J+1}.$$
The spectral theorem implies that:
$$\dist(\Gamma_{J,n}(\eps),\sigma_{\mathsf{dis}}(\mathcal{L}^{[2]}_{\eps,b\mathcal{A}_{\eps}}))\leq C\eps^{J+1}.$$
It remains to use the spectral gap given by the approximation of the resolvent in Theorem~\ref{Thm-2D} and Corollary~\ref{expansion-eigenvalues-2D} follows.

\section{Proofs in three dimensions}\label{3D}

\subsection{Preliminaries}
We will adopt the following notation:
\begin{notation}\label{notation}
Given an open set $U \subset \R^d$ 
and a vector field $\F=\F(y_{1},\cdots, y_{d}) : U\to\R^d$ 
in dimension $d=2,3$, 
we will use in our computations the following notation:
$$
  \curl\F = 
\begin{cases}
  \dr_{y_{1}}\F_{2}-\dr_{y_{2}}\F_{1}
  &\mbox{if}\quad d=2,
  \\
  (\dr_{y_{2}}\F_{3}-\dr_{y_{3}}\F_{2},
  \dr_{y_{3}}\F_{1}-\dr_{y_{1}}\F_{3},
  \dr_{y_{1}}\F_{2}-\dr_{y_{2}}\F_{1})
  &\mbox{if}\quad d=3.
\end{cases}
$$	
The reader is warned that, 
if $(y_{1},\cdots, y_{d})$ represent curvilinear coordinates,
the outcome will differ from the usual (invariant) definition of~$\curl$.
\end{notation}
We recall the relations between $\mathcal{A}$, $\mathcal{B}$ and $\A$, $\B$. This can be done in terms of differential forms.
Let us consider the $1$-form:
$$\xi_{\A}=\A_{1}\dx x_{1}+\A_{2}\dx x_{2}+\A_{3}\dx x_{3}.$$
We consider $\Phi$ the diffeomorphism defined in \eqref{Phi}. 
The pull-back of $\xi_{\A}$ by $\Phi$ is given by:
$$\Phi^*\xi_{\A}=\mathcal{A}_{1}\dx t_{1}+\mathcal{A}_{2}\dx t_{2}+\mathcal{A}_{3}\dx t_{3}.$$
where $\mathcal{A}={}^t D\Phi \A(\Phi)$ since we have $x=\Phi(t)$ and we can write:
\begin{equation}\label{dx->dt}
\dx x_{i}=\sum_{j=1}^3\dr_{j}x_{i} \dx t_{j}.
\end{equation}
We can compute the exterior derivatives:
$$\dx\xi_{\A}=\B_{23}\dx x_{2}\wedge \dx x_{3}+\B_{13}\dx x_{1}\wedge \dx x_{3}+\B_{12}\dx x_{1}\wedge \dx x_{2}$$
and
$$d(\Phi^* \xi_{\A})=\mathcal{B}_{23}\dx t_{2}\wedge \dx t_{3}+\mathcal{B}_{13}\dx t_{1}\wedge \dx t_{3}+\mathcal{B}_{12}\dx t_{1}\wedge \dx t_{2},$$
with $\mathcal{B}=\curl\mathcal{A}$ and $\B=\curl\A$ (see Notation \ref{notation}). It remains to notice that the pull-back and the exterior derivative commute to get:
$$\Phi^* d\xi_{\A}= d(\Phi^*\xi_{\A})$$
and, using again \eqref{dx->dt}, it provides the relation:
$$\mathcal{B}={}^t Com(D\Phi)\B=\det (D\Phi) (D\Phi)^{-1}\B,$$
where ${}^t Com(D\Phi)$ denotes the transpose of the comatrix of $D\Phi$.
Let us give an interpretation of the components of $\mathcal{B}$.
A straightforward computation provides the following expression for $D\Phi$:
\begin{multline*}
[hT(s)+h_{2}(\sin\theta M_{2}-\cos\theta M_{3})+h_{3}(-\cos\theta M_{2}-\sin\theta M_{3}), \cos\theta M_{2}+\sin\theta M_{3}, -\sin\theta M_{2}+\cos\theta M_{3}]
\end{multline*}
so that $\det D\Phi=h$ and
$$\mathcal{B}_{23}=h(h^2+h_{2}^2+h_{3}^2)^{-1/2}\B\cdot T(s),\, \mathcal{B}_{13}=-h\B\cdot(-\cos\theta M_{2}-\sin\theta M_{3}),\, \mathcal{B}_{12}=h\B\cdot (-\sin\theta M_{2}+\cos\theta M_{3}).$$
Let us check that $\mathfrak{L}^{[3]}_{\eps,b\A}$ (whose quadratic form is denoted by $\mathfrak{Q}^{[3]}_{\eps,b\A}$) is unitarily equivalent to $\mathfrak{L}^{[3]}_{\eps,b\mathcal{A}}$ given in \eqref{L-frak3}. For that purpose we let:
$$G={}^t D\Phi D\Phi$$
and a computation provides:
$$G=\begin{pmatrix}
h^2+h_{2}^2+h_{3}^2&-h_{3} &-h_{2} \\
-h_{3}& 1& 0\\
-h_{2}& 0&1
\end{pmatrix}$$
and:
$$G^{-1}=\begin{pmatrix}
0&0&0 \\
0& 1& 0\\
0& 0&1
\end{pmatrix}+h^{-2}\begin{pmatrix} 1\\ h_{3}\\ h_{2} \end{pmatrix}\begin{pmatrix} 1&h_{3} &h_{2}\end{pmatrix}.
$$
We notice that $|G|=h^2$. In terms of quadratic form we write:
$$\mathfrak{Q}^{[3]}_{\eps,b\A}(\psi)=\int_{\R\times(\eps\omega)}|{}^t D\Phi^{-1}(-i\nabla_{t}+{}^t D\Phi \A(\Phi))|^2 \,h\dx t$$
and
\begin{eqnarray*}
&\mathfrak{Q}^{[3]}_{\eps,b\A}(\psi)=&\int_{\R\times(\eps\omega)} \left(|(-i\dr_{t_{2}}+b\mathcal{A}_{2})\psi|^2+ |(-i\dr_{t_{3}}+b\mathcal{A}_{3})\psi|^2\right)\,h \dx t\\
&                                                               &+\int_{\R\times(\eps\omega)}h^{-2}|\left(-i\dr_{s}+b\mathcal{A}_{1}+h_{3}(-i\dr_{t_{2}}+b\mathcal{A}_{2})+h_{2}(-i\dr_{t_{3}}+b\mathcal{A}_{3})\right)\psi|^2\,h\dx t
\end{eqnarray*}
so that:
\begin{multline*}
\mathfrak{Q}^{[3]}_{\eps,b\A}(\psi)\\
=\int_{\R\times(\eps\omega)} \left(|(-i\dr_{t_{2}}+b\mathcal{A}_{2})\psi|^2+ |(-i\dr_{t_{3}}+b\mathcal{A}_{3})\psi|^2+h^{-2}|(-i\dr_{s}+b\mathcal{A}_{1}-i\theta'\dr_{\alpha}+\mathcal{R})\psi|^2\right)\,h\dx t.
\end{multline*}

\paragraph{Choice of gauge} Since $\omega$ is simply connected (and so is $\Omega_{\eps}$) we may change the gauge and assume that the vector potential is given by:
\begin{eqnarray}\label{explicit-A}
\nonumber&\mathcal{A}_{1}(s,t_{2},t_{3})&=-\frac{t_{2} t_{3}\dr_{s}\mathcal{B}_{23}(s,0,0)}{2}-\int_{0}^{t_{2}} \mathcal{B}_{12}(s,\tilde t_{2},t_{3})\dx \tilde t_{2}-\int_{0}^{t_{3}}\mathcal{B}_{13}(s,0,\tilde t_{3})\dx \tilde t_{3} ,\\
&\mathcal{A}_{2}(s,t_{2},t_{3})&=-\frac{ t_{3}\mathcal{B}_{23}(s,0,0)}{2},\\
\nonumber&\mathcal{A}_{3}(s,t_{2},t_{3})&=-\frac{ t_{2}\mathcal{B}_{23}(s,0,0)}{2}+\int_{0}^{t_{2}}\mathcal{B}_{23}(s,\tilde t_{2},t_{3})\dx \tilde t_{2}.
\end{eqnarray}
In other words, thanks to the Poincar\'e lemma, there exists a (smooth) phase function $\rho$ such that $D\Phi \A(\Phi)+\nabla_{t}\rho=\mathcal{A}$. In particular, we have: $\mathcal{A}_{j}(s,0)=0, \dr_{j}\mathcal{A}_{j}(s,0)=0$ for $j\in\{1,2,3\}$.

\subsection{Proof of Theorem \ref{Thm-3D}}
Let us consider $\delta\leq 1$ and $K\geq 2\sup\frac{\kappa^2}{4}$.
\paragraph{A first approximation}
We let:
$$\mathcal{L}^{[3]}_{\eps,\delta}=\mathcal{L}^{[3]}_{\eps,\eps^{-\delta}\mathcal{A}_{\eps}}-\eps^{-2}\la_{1}^\Dir(\omega)+K$$
and
$$\mathcal{L}^{\app,[3]}_{\eps,\delta}=\sum_{j=2,3} (-i\eps^{-1}\dr_{\tau_{j}}+b\mathcal{A}^\lin_{j,\eps})^2+(-i\dr_{s}+b\mathcal{A}_{1,\eps}^\lin-i\theta'\dr_{\alpha})^2-\frac{\kappa^2}{4}-\eps^{-2}\dr_{\tau}^2-\eps^{-2}\la_{1}^\Dir(\omega)+K,$$
where:
$$\mathcal{A}^\lin_{j,\eps}(s,\tau)=\mathcal{A}_{j}(s,0)+\eps\tau_{2}\dr_{2}\mathcal{A}_{j}(s,0)+\eps\tau_{3}\dr_{3}\mathcal{A}_{j}(s,0).$$
We recall that $\mathcal{A}$ is given by \eqref{explicit-A} and that $\mathcal{L}^{[3]}_{\eps,\eps^{-\delta}\mathcal{A}_{\eps}}$ is defined in \eqref{L3}. We have to analyse the difference of the corresponding sesquilinear forms:
$$\mathcal{B}^{[3]}_{\eps,\delta}(\phi,\psi)-\mathcal{B}^{\app, [3]}_{\eps,\delta}(\phi,\psi).$$
Let us deal with the term:
$$\langle h_{\eps}^{-1}(-i\dr_{s}-i\theta'\dr_{\alpha}+\mathcal{R}_{\eps})h_{\eps}^{-1/2}\phi,(-i\dr_{s}-i\theta'\dr_{\alpha}+\mathcal{R}_{\eps})h_{\eps}^{-1/2}\psi\rangle.$$
Since we have $|(-i\dr_{s}-i\theta'\dr_{\alpha}) h_{\eps}^{-1/2}|\leq C\eps$, we get:
\begin{multline*}
|\langle h_{\eps}^{-1}(-i\dr_{s}-i\theta'\dr_{\alpha}+\mathcal{R}_{\eps})h_{\eps}^{-1/2}\phi,(-i\dr_{s}-i\theta'\dr_{\alpha}+\mathcal{R}_{\eps})(h_{\eps}^{-1/2}-1)\psi\rangle|\\
\leq C\eps \|h_{\eps}^{-1/2}(-i\dr_{s}-i\theta'\dr_{\alpha}+\mathcal{R}_{\eps})h_{\eps}^{-1/2}\phi\|(\|\psi\|+\|(-i\dr_{s}-i\theta'\dr_{\alpha}+\mathcal{R}_{\eps})\psi\|)
\end{multline*}
and we get:
\begin{multline*}
|\langle h_{\eps}^{-1}(-i\dr_{s}-i\theta'\dr_{\alpha}+\mathcal{R}_{\eps})h_{\eps}^{-1/2}\phi,(-i\dr_{s}-i\theta'\dr_{\alpha}+\mathcal{R}_{\eps})(h_{\eps}^{-1/2}-1)\psi\rangle|\\
\leq C\eps\left(\|\phi\|\|\psi\|+\|\phi\|\sqrt{\mathcal{Q}^{[3]}_{\eps,\delta}(\psi)}+\|\psi\|\sqrt{\mathcal{Q}^{\app [3]}_{\eps,\delta}(\phi)}+\sqrt{\mathcal{Q}^{\app, [3]}_{\eps,\delta}(\phi)}\sqrt{\mathcal{Q}^{[3]}_{\eps,\delta}(\psi)}\right).
\end{multline*}
With the same kind of estimates, it follows that:
\begin{multline*}
|\langle h_{\eps}^{-1}(-i\dr_{s}-i\theta'\dr_{\alpha}+\mathcal{R}_{\eps})h_{\eps}^{-1/2}\phi,(-i\dr_{s}-i\theta'\dr_{\alpha}+\mathcal{R}_{\eps})h_{\eps}^{-1/2}\psi\rangle\\
-\langle (-i\dr_{s}-i\theta'\dr_{\alpha}+\mathcal{R}_{\eps})\phi,(-i\dr_{s}-i\theta'\dr_{\alpha}+\mathcal{R}_{\eps})\psi\rangle|\\
\leq C\eps\left(\|\phi\|\|\psi\|+\|\phi\|\sqrt{\mathcal{Q}^{[3]}_{\eps,\delta}(\psi)}+\|\psi\|\sqrt{\mathcal{Q}^{\app [3]}_{\eps,\delta}(\phi)}+\sqrt{\mathcal{Q}^{\app, [3]}_{\eps,\delta}(\phi)}\sqrt{\mathcal{Q}^{[3]}_{\eps,\delta}(\psi)}\right).
\end{multline*}
With the Taylor formula we notice that:
$$|\mathcal{A}_{j,\eps}(s,\tau)-\mathcal{A}^\lin_{j,\eps}(s,\tau)|\leq C\eps^2.$$
We notice that $|\mathcal{R}_{\eps}|\leq C\eps$ due to the properties of the vector potential $\mathcal{A}$ (see \eqref{explicit-A}). Then we can apply the same technique as in Section \ref{2D} to deduce:
$$|\mathcal{B}^{[3]}_{\eps,\delta}(\phi,\psi)-\mathcal{B}^{\app, [3]}_{\eps,\delta}(\phi,\psi)|\leq C\eps\sqrt{\mathcal{Q}^{\app, [3]}_{\eps,\delta}(\phi)}\sqrt{\mathcal{Q}^{[3]}_{\eps,\delta}(\psi)} .$$
and then:
\begin{equation}\label{first-approx}
\left\|(\mathcal{L}^{[3]}_{\eps,\delta})^{-1}-(\mathcal{L}^{\app,[3]}_{\eps,\delta})^{-1}\right\|\leq \tilde C\eps.
\end{equation}
\paragraph{Case $\delta<1$}
This case is similar to the case in dimension $2$ since $|b\mathcal{A}^\lin_{j,\eps}|\leq C\eps^{1-\delta}$. If we let:
$$\mathcal{L}^{\appp,[3]}_{\eps,\delta}=\sum_{j=2,3} (-i\eps^{-1}\dr_{\tau_{j}})^2+(-i\dr_{s}-i\theta'\dr_{\alpha})^2-\frac{\kappa^2}{4}-\eps^{-2}\dr_{\tau}^2-\eps^{-2}\la_{1}^\Dir(\omega)+K,$$
we easily get:
$$\left\|(\mathcal{L}^{\appp,[3]}_{\eps,\delta})^{-1}-(\mathcal{L}^{\app,[3]}_{\eps,\delta})^{-1}\right\|\leq \tilde C\eps^{1-\delta}.$$
It remains to decompose the sesquilinear form associated with $\mathcal{L}^{\appp,[3]}_{\eps,\delta}$ by using the orthogonal projection $\Pi_{0}$ and the analysis follows the same lines as in dimension $2$.
\paragraph{Case $\delta=1$} This case cannot be analysed in the same way as in dimension 2. Using the explicit expression of the vector potential \eqref{explicit-A}, we can write our approximated operator in the form:
\begin{eqnarray*}
\mathcal{L}^{\appp,[3]}_{\eps,1}=&\left(-\eps^{-1}i\dr_{\tau_{2}}-\frac{\mathcal{B}_{23}(s,0,0)}{2}\tau_{3}\right)^2+\left(-\eps^{-1}i\dr_{\tau_{3}}+\frac{\mathcal{B}_{23}(s,0,0)}{2}\tau_{2}\right)^2\\
                                                           &+(-i\dr_{s}-i\theta'\dr_{\alpha}-\tau_{2}\mathcal{B}_{12}(s,0,0)-\tau_{3}\mathcal{B}_{13}(s,0,0))^2-\eps^{-2}\la_{1}^\Dir(\omega)+K.
\end{eqnarray*}

\paragraph{Perturbation theory}
Let us introduce the operator on $L^2(\omega)$ (with Dirichlet boundary condition) and depending on $s$:
$$\mathcal{P}_{\eps}^2=\left(-\eps^{-1}i\dr_{\tau_{2}}-\frac{\mathcal{B}_{23}(s,0,0)}{2}\tau_{3}\right)^2+\left(-\eps^{-1}i\dr_{\tau_{3}}+\frac{\mathcal{B}_{23}(s,0,0)}{2}\tau_{2}\right)^2.$$
Thanks to perturbation theory the lowest eigenvalue $\nu_{1,\eps}(s)$ of $\mathcal{P}_{\eps}(s)$ is simple and we may consider an associated $L^2$ normalized eigenfunction $u_{\eps}(s)$. Let us provide a estimate for the eigenpair $(\nu_{1,\eps}(s),u_{\eps}(s))$. We have to be careful with the dependence on $s$ in the estimates. Firstly, we notice that there exist $\eps_{0}>0$ and $C>0$ such that for all $s$, $\eps\in(0,\eps_{0})$ and all $\psi\in H^1_{0}(\omega)$:
\begin{multline}
\int_{\omega} \left|\left(-\eps^{-1}i\dr_{\tau_{2}}-\frac{\mathcal{B}_{23}(s,0,0)}{2}\tau_{3}\right)\psi\right|^2+\left|\left(-\eps^{-1}i\dr_{\tau_{3}}+\frac{\mathcal{B}_{23}(s,0,0)}{2}\tau_{2}\right)\psi\right|^2 \dx \tau\\
\geq \eps^{-2} \int_{\omega} |\dr_{\tau_{2}}\psi|^2+|\dr_{\tau_{3}}\psi|^2 \dx\tau-C\eps^{-1}\|\psi\|^2.
\end{multline}
From the min-max principle we infer that:
\begin{equation}\label{sgap}
\nu_{n,\eps}(s)\geq \eps^{-2}\la^\Dir_{n}(\omega)-C\eps^{-1}.
\end{equation}
Let us analyse the corresponding upper bound. Thanks to the Fredholm alternative, we may introduce $R_{\omega}$ the unique function such that:
\begin{equation}\label{J1}
(-\Delta^\Dir_{\omega}-\la_{1}^{\Dir}(\omega)) R_{\omega}=D_{\alpha} J_{1},\quad \langle R_{\omega}, J_{1}\rangle_{\omega}=0.
\end{equation}
We use $v_{\eps}=J_{1}+\eps\mathcal{B}_{23}(s,0,0)R_{\omega}$ as test function for $\mathcal{P}_{\eps}^2$ and an easy computation provides that there exist $\eps_{0}>0$ and $C>0$ such that for all $s$, $\eps\in(0,\eps_{0})$:
$$\left\|\left(\mathcal{P}_{\eps}^2-\left(\eps^{-2}\la^\Dir_{1}(\omega)+\mathcal{B}_{23}^2(s,0,0)\left(\frac{\|\tau J_{1}\|_{\omega}^2}{4}-\langle D_{\alpha}R_{\omega},J_{1}\rangle_{\omega}\right)\right)\right)v_{\eps}\right\|_{\omega}\leq C\eps.$$
The spectral theorem implies that there exists $n(\eps,s)\geq 1$ such that:
$$\left|\nu_{n(\eps,s),\eps}(s)-\eps^{-2}\la^\Dir_{1}(\omega)-\mathcal{B}_{23}^2(s,0,0)\left(\frac{\|\tau J_{1}\|_{\omega}^2}{4}-\langle D_{\alpha}R_{\omega},J_{1}\rangle_{\omega}\right)\right|\leq C\eps.$$
Due to the spectral gap uniform in $s$ given by \eqref{sgap} we deduce that  there exist $\eps_{0}>0$ and $C>0$ such that for all $s$, $\eps\in(0,\eps_{0})$:
$$\left|\nu_{1,\eps}(s)-\eps^{-2}\la^\Dir_{1}(\omega)-\mathcal{B}_{23}^2(s,0,0)\left(\frac{\|\tau J_{1}\|^2}{4}-\langle D_{\alpha}R_{\omega},J_{1}\rangle_{\omega}\right)\right|\leq C\eps.$$
This new information provides:
$$\left\|\left(\mathcal{P}_{\eps}^2-\nu_{1,\eps}(s)\right)v_{\eps}\right\|_{\omega}\leq \tilde C\eps$$
and thus:
$$\left\|\left(\mathcal{P}_{\eps}^2-\nu_{1,\eps}(s)\right)(v_{\eps}-\langle v_{\eps},u_{\eps}\rangle_{\omega} u_{\eps})\right\|_{\omega}\leq \tilde C\eps.$$
so that, with the spectral theorem and the uniform gap between the eigenvalues:
$$\left\|v_{\eps}-\langle v_{\eps},u_{\eps}\rangle_{\omega} u_{\eps}\right\|_{\omega}\leq C\eps^3.$$
Up to changing $u_{\eps}$ in $-u_{\eps}$, we infer that :
$$||\langle v_{\eps},u_{\eps}\rangle_{\omega}|-\|v_{\eps}\|_{\omega}|\leq C\eps^3,\quad \left\|v_{\eps}-\|v_{\eps}\|_{\omega} u_{\eps}\right\|_{\omega}\leq \tilde C\eps^3.$$
Therefore we get:
$$\left\|u_{\eps}-\tilde v_{\eps}\right\|_{\omega}\leq C\eps^3,\quad \tilde v_{\eps}=\frac{v_{\eps}}{\|v_{\eps}\|_{\omega}}$$
and this is easy to deduce:
\begin{equation}\label{approx-H1}
\left\|\nabla_{\tau_{2},\tau_{3}}\left(u_{\eps}-\tilde v_{\eps}\right)\right\|_{\omega}\leq C\eps^3.
\end{equation}
\paragraph{Projection arguments} 
We shall analyse the difference of the sesquilinear forms:
$$\mathcal{D}_{\eps}(\phi,\psi)=\mathcal{L}^{\appp,[3]}_{\eps,1}(\phi,\psi)-\mathcal{L}^{\eff,[3]}_{\eps,1}(\phi,\psi).$$
We write:
$$\mathcal{D}_{\eps}(\phi,\psi)=\mathcal{D}_{\eps,1}(\phi,\psi)+\mathcal{D}_{\eps,2}(\phi,\psi),$$
where
$$\mathcal{D}_{\eps,1}(\phi,\psi)=\langle\mathcal{P}_{\eps}\phi,\mathcal{P}_{\eps}\psi\rangle-\left\langle-\eps^{-2}\Delta_{\omega}^\Dir+\mathcal{B}_{23}^2(s,0,0)\left(\frac{\|\tau J_{1}\|_{\omega}^2}{4}-\langle D_{\alpha}R_{\omega},J_{1}\rangle_{\omega}\right)\right\rangle$$
and
$$\mathcal{D}_{\eps,2}(\phi,\psi)=\langle\mathcal{M}\phi,\psi\rangle-\langle\mathcal{M}^\eff\phi,\psi\rangle,$$
with:
$$\mathcal{M}=\left(-i\dr_{s}-i\theta'\dr_{\alpha}-\tau_{2}\mathcal{B}_{12}(s,0,0)-\tau_{3}\mathcal{B}_{13}(s,0,0)\right)^2,$$
$$\mathcal{M}^\eff=\langle(-i\dr_{s}-i\theta'\dr_{\alpha}-\mathcal{B}_{12}(s,0,0)\tau_{2}-\mathcal{B}_{13}(s,0,0)\tau_{3})^2 \Id(s)\otimes J_{1}, \Id(s)\otimes J_{1}\rangle_{\omega}.$$

\paragraph{Estimate of $\mathcal{D}_{\eps,1}(\phi,\psi)$}
We introduce the projection on $u_{\eps}(s)$:
$$\Pi_{\eps,s}\varphi=\langle\varphi ,u_{\eps}\rangle_{\omega}\, u_{\eps}(s)$$
and, for $\varphi\in H^1_{0}(\Omega)$, we let:
$$\varphi^{\para_{\eps}}=\Pi_{\eps,s}\varphi,\quad \varphi^{\perp_{\eps}}=\varphi-\Pi_{\eps,s}\varphi.$$
We can write the formula:
$$\mathcal{D}_{\eps,1}(\phi,\psi)=\mathcal{D}_{\eps,1}(\phi^{\para_{\eps}},\psi^\para)+\mathcal{D}_{\eps,1}(\phi^{\para_{\eps}},\psi^\perp)+\mathcal{D}_{\eps,1}(\phi^{\perp_{\eps}},\psi^\para)+\mathcal{D}_{\eps,1}(\phi^{\perp_{\eps}},\psi^\perp),$$
where $\psi^\para=\Pi_{0}\psi=\langle\psi, J_{1}\rangle_{\omega}\, J_{1}$ and $\psi^\perp=\psi-\psi^\para$.
\begin{remark}
We notice that the decomposition of the sesquilinear form is performed with respect to the two projections $\Pi_{0}$ and $\Pi_{\eps,s}$. This is due to the fact that we need to catch the effect of the magnetic field in the subprincipal terms.
\end{remark}
Let us analyse $\mathcal{D}_{\eps,1}(\phi^{\para_{\eps}},\psi^\para)$. We have to estimate:
$$\left\langle\left(\frac{\mathcal{B}_{23}}{\eps} D_{\alpha}+\frac{\mathcal{B}_{23}^2}{4}(\tau_{2}^2+\tau_{3}^2)\right)\phi^{\para_{\eps}},\psi^\para\right\rangle-\left\langle\frac{\mathcal{B}_{23}^2(s,0,0)}{4}\left(\|\tau J_{1}\|_{\omega}^2-4\langle D_{\alpha}R_{\omega},J_{1}\rangle_{\omega}\right)\phi^{\para_{\eps}},\psi^\para\right\rangle$$
We notice that:
\begin{multline*}
\left|\left\langle\left(\frac{\mathcal{B}_{23}^2}{2}(\tau_{2}^2+\tau_{3}^2)\right)\phi^{\para_{\eps}},\psi^\para\right\rangle-\left\langle\frac{\mathcal{B}_{23}^2(s,0,0)}{4}\|\tau J_{1}\|_{\omega}^2 \phi^{\para_{\eps}},\psi^\para\right\rangle\right|\\
\leq C \left|\int_{\omega} (\tau^2 u_{\eps}J_{1}-\tau^2 J_{1}^2)\dx\tau\right| \|\langle \phi,u_{\eps}\rangle\|\|\langle \psi,J_{1}\rangle\|.
\end{multline*}
Thanks to the approximation result, we get (uniformly in $s$):
$$\left|\int_{\omega} (\tau^2 u_{\eps}J_{1}-\tau^2 J_{1}^2)\dx\tau\right|\leq C\eps$$
and thus:
$$\left|\left\langle\left(\frac{\mathcal{B}_{23}^2}{2}(\tau_{2}^2+\tau_{3}^2)\right)\phi^{\para_{\eps}},\psi^\para\right\rangle-\left\langle\frac{\mathcal{B}_{23}^2(s,0,0)}{4}\|\tau J_{1}\|_{\omega}^2 \phi^{\para_{\eps}},\psi^\para\right\rangle\right|\\
\leq C\eps\|\phi\|\|\psi\|.$$
Then, we get:
$$\left|\left\langle\frac{\mathcal{B}_{23}(s,0,0)}{\eps} D_{\alpha}\phi^{\para_{\eps}},\psi^\para\right\rangle-\left\langle \langle\psi, J_{1}\rangle_{\omega}\frac{\mathcal{B}_{23}(s,0,0)}{\eps} D_{\alpha}\tilde v_{\eps}  , \langle\phi, u_\eps\rangle_{\omega} J_{1} \right\rangle\right|\leq C\eps^2\|\phi\|\|\psi\|$$ 
and a computation gives:
$$\left\langle \langle\psi, J_{1}\rangle_{\omega}\frac{\mathcal{B}_{23}(s,0,0)}{\eps} D_{\alpha}v_{\eps}  , \langle\phi, u_{\eps}\rangle_{\omega} J_{1} \right\rangle=\left\langle D_{\alpha} R_{\omega} , J_{1}\rangle_{\omega}\langle\mathcal{B}_{23}(s,0,0)^2 \langle\psi, J_{1}\rangle_{\omega}, \langle\phi, u_{\eps}\rangle_{\omega}  \right\rangle$$
and in the same way we get:
$$\left|\langle\mathcal{B}_{23}(s, 0,0)^2 \langle\psi, J_{1}\rangle_{\omega}, \langle\phi, u_{\eps}\rangle_{\omega}\rangle-\langle \mathcal{B}_{23}(s,0,0)^2\psi^\para ,\phi^{\para_{\eps}} \rangle\right|\leq C\eps\|\phi\|\|\psi\|.$$
Therefore we deduce:
$$|\mathcal{D}_{\eps,1}(\phi^{\para_{\eps}},\psi^\para)|\leq C\eps\|\phi\|\|\psi\|\leq \tilde C\eps \sqrt{\mathcal{Q}^{\eff,[3]}_{\eps,1}(\phi)} \sqrt{\mathcal{Q}^{\app2,[3]}_{\eps,1}(\psi)}.$$
Let us now deal with $\mathcal{D}_{\eps,1}(\phi^{\para_{\eps}},\psi^\perp)$. We notice that:
$$\|\psi^\perp\|\leq C\eps\|\eps^{-1}\nabla\psi^\perp\|\leq C\eps\|\eps^{-1}\nabla\psi\|,$$
since $\langle\nabla\psi^\para, \nabla\psi^\perp\rangle=\langle -\Delta^\Dir\psi^\para,\psi^\perp\rangle=0$. In addition, we easily get:
$$\|\eps^{-1}\nabla\psi\|\leq C\sqrt{\mathcal{Q}^{\app2,[3]}_{\eps,1}(\psi)}+C\|\psi\|.$$
Therefore we deduce that:
$$\|\psi^\perp\|\leq \tilde C\eps \sqrt{\mathcal{Q}^{\app2,[3]}_{\eps,1}(\psi)}.$$
The most delicate term to analyse is:
\begin{multline*}
\left|\left\langle\frac{\mathcal{B}_{23}}{2}\eps^{-1}D_{\alpha}\phi^{\para_{\eps}},\psi^\perp\right\rangle\right|\leq C\|\psi^\perp\|\|\eps^{-1}D_{\alpha}\phi^{\para_{\eps}}\|\leq \tilde C\|\psi^\perp\|\|\eps^{-1}\nabla\phi^{\para_{\eps}}\|\\
\leq C\|\psi^\perp\|\|\mathcal{P}_{\eps}\phi^{\para_{\eps}}\|+C\|\psi^\perp\|\|\phi\|.
\end{multline*}
But we have:
$$\|\mathcal{P}_{\eps}\phi\|^2=\|\mathcal{P}_{\eps}\phi^{\para_{\eps}}\|^2+\|\mathcal{P}_{\eps}\phi^{\perp_{\eps}}\|^2+2\langle \mathcal{P}_{\eps}\phi^{\para_{\eps}},\mathcal{P}_{\eps}\phi^{\perp_{\eps}}\rangle=\|\mathcal{P}_{\eps}\phi^{\para_{\eps}}\|^2+\|\mathcal{P}_{\eps}\phi^{\perp_{\eps}}\|^2$$
and:
$$\|\mathcal{P}_{\eps}\phi\|^2\leq C(\|\eps^{-1}\nabla\phi\|^2+\|\phi\|^2)\leq C\mathcal{Q}^{\eff,[3]}_{\eps,1}(\phi)+C\|\phi\|^2$$
so that:
$$\left|\left\langle\frac{\mathcal{B}_{23}}{2}\eps^{-1}D_{\alpha}\phi^{\para_{\eps}},\psi^\perp\right\rangle\right|\leq C\eps \sqrt{\mathcal{Q}^{\app2,[3]}_{\eps,1}(\psi)}\sqrt{\mathcal{Q}^{\eff,[3]}_{\eps,1}(\phi)}.$$
The term $\mathcal{D}_{\eps,1}(\phi^{\perp_{\eps}},\psi^\para)$ can be analysed with the same arguments since we have:
$$C\eps^{-2}\|\phi^{\perp_{\eps}}\|^2\leq\|\mathcal{P}_{\eps}\phi^{\perp_{\eps}}\|^2\leq \|\mathcal{P}_{\eps}\phi\|^2\leq \tilde C\|\eps^{-1}\nabla\phi\|^2+\tilde C\|\phi\|^2\leq C\mathcal{Q}^{\eff,[3]}_{\eps,1}(\phi).$$
The investigation of $\mathcal{D}_{\eps,1}(\phi^{\perp_{\eps}},\psi^\perp)$ goes along the same lines.
Therefore we have proved that:
\begin{equation}\label{D1}
|\mathcal{D}_{\eps,1}(\phi,\psi)|\leq C\eps\sqrt{\mathcal{Q}^{\app2,[3]}_{\eps,1}(\psi)}\sqrt{\mathcal{Q}^{\eff,[3]}_{\eps,1}(\phi)}.
\end{equation}
\paragraph{Estimate of $\mathcal{D}_{\eps,2}(\phi,\psi)$} We use the decomposition of $\phi$ and $\psi$ with respect to $J_{1}$ and its orthogonal. We have:
$$\mathcal{D}_{2,\eps}(\phi^\para,\psi^\para)=0.$$
Let us explain how to deal with term $\mathcal{D}_{2,\eps}(\phi^\para,\psi^\perp)$. The worst term can be bounded by $\|\dr_{s}\phi^\para\|\|\psi^\perp\|$ and we have:
$$\|\dr_{s}\phi^\para\|\leq \|\dr_{s}\phi\|\leq C \mathcal{Q}^{\eff,[3]}_{\eps,1}(\phi)+C\|\phi\|\leq\tilde C\mathcal{Q}^{\eff,[3]}_{\eps,1}(\phi).$$
In addition we have:
$$C\eps^{-2}\|\psi^\perp\|^2\leq \|\eps^{-1}\nabla\psi^\perp\|^2\leq  \|\eps^{-1}\nabla\psi\|^2\leq C\|\mathcal{P}_{\eps}\psi\|^2+C\|\psi\|^2\leq\tilde C\mathcal{Q}^{\app2,[3]}_{\eps,1}(\psi).$$
We infer that:
$$\|\dr_{s}\phi^\para\|\|\psi^\perp\|\leq C\eps\sqrt{\mathcal{Q}^{\app2,[3]}_{\eps,1}(\psi)}\sqrt{\mathcal{Q}^{\eff,[3]}_{\eps,1}(\phi)}.$$
The analysis of $\mathcal{D}_{2,\eps}(\phi^\perp,\psi^\para)$ and $\mathcal{D}_{2,\eps}(\phi^\perp,\psi^\perp)$ can be performed in the same way and we get
\begin{equation}\label{D2}
|\mathcal{D}_{\eps,2}(\phi,\psi)|\leq C\eps\sqrt{\mathcal{Q}^{\app2,[3]}_{\eps,1}(\psi)}\sqrt{\mathcal{Q}^{\eff,[3]}_{\eps,1}(\phi)}.
\end{equation}
Combining \eqref{D1} and \eqref{D2}, we infer that:
$$|\mathcal{D}_{\eps}(\phi,\psi)|\leq C\eps\sqrt{\mathcal{Q}^{\app2,[3]}_{\eps,1}(\psi)}\sqrt{\mathcal{Q}^{\eff,[3]}_{\eps,1}(\phi)}.$$
With Lemma \ref{NRC} we infer:
\begin{equation}\label{second-approx}
\left\| \left(\mathcal{L}^{\appp,[3]}_{\eps,1}\right)^{-1}-\left(\mathcal{L}^{\eff,[3]}_{\eps,1}\right)^{-1}\right\|\leq C\eps.
\end{equation}
Finally we deduce Theorem \ref{Thm-3D} from \eqref{first-approx} and \eqref{second-approx}.

\subsection{Proof of Corollary~\ref{expansion-eigenvalues-3D}}
For the asymptotic expansions of the eigenvalues claimed in Corollary~\ref{expansion-eigenvalues-3D}, we leave the proof to the reader since it is a slight adaptation of the proof of Corollary~\ref{expansion-eigenvalues-2D}.

\section{Repulsive effect of magnetic fields}\label{rep}

\subsection{Proof of Theorem~\ref{stability}}
Let us perform a preliminary computation.
\begin{lemma}
We have, for all $R>0$ and $\phi\in\mathcal{C}_{0}^{\infty}(\Omega)$:
\begin{equation}\label{IMS-J0}
\int_{\Omega(R)} |(-i\nabla+\A)J_{1}\phi|^2\dx t=\int_{\Omega(R)} J_{1}^2|(-i\nabla+\A)\phi|^2\dx t+\la_{1}^\Dir(\omega)\int_{\Omega(R)} J_{1}^2|\phi|^2\dx t.
\end{equation}
\end{lemma}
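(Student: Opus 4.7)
The plan is to exploit the fact that $J_1$ depends only on the cross-section variable $\tau$ and is the Dirichlet ground state on $\omega$, so that integration by parts can be performed purely in the transverse direction, leaving no boundary contribution at $\{|s|=R\}$. First I would apply the Leibniz rule
$$
(-i\nabla+\A)(J_1\phi)=-i(\nabla J_1)\phi+J_1(-i\nabla+\A)\phi,
$$
and expand the squared modulus pointwise into three pieces: a diagonal term $|\nabla J_1|^2|\phi|^2$, the desired term $J_1^2|(-i\nabla+\A)\phi|^2$, and a Hermitian cross term.

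The next step is to simplify the cross term. Writing out the inner product of vectors and using that $J_1$ and $\nabla J_1$ are real and that $\nabla J_1$ has no component in the $s$-direction, the contribution involving the vector potential takes the form $\mathrm{Re}\,[\,i\,|\phi|^2 J_1 \nabla_\tau J_1\cdot\A\,]$, which vanishes identically because the bracketed quantity is purely imaginary. What remains is exactly
$$
2\,\mathrm{Re}\,\langle -i(\nabla J_1)\phi,\,J_1(-i\nabla+\A)\phi\rangle \;=\; J_1\,\nabla_\tau J_1\cdot\nabla_\tau|\phi|^2,
$$
a magnetic-field-free expression.

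The third step is an integration by parts in the $\tau$ variables only. Since $J_1$ vanishes on $\partial\omega$, there is no boundary term in $\tau$, and because the integration by parts is not performed in $s$, the domain $\Omega(R)$ is handled trivially: the longitudinal slices $\{s\}\times\omega$ are used fibrewise and no term at $|s|=R$ is generated. Using $\nabla_\tau\cdot(J_1\nabla_\tau J_1)=|\nabla_\tau J_1|^2+J_1\Delta_\tau J_1=|\nabla_\tau J_1|^2-\lambda_1^{\mathrm{Dir}}(\omega)J_1^2$, the cross term cancels exactly the diagonal $|\nabla J_1|^2|\phi|^2$ piece and produces $\lambda_1^{\mathrm{Dir}}(\omega)\int J_1^2|\phi|^2$, yielding the claimed identity.

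The only conceptual subtlety, which I would flag as the main point requiring care, is the absence of a boundary contribution at $|s|=R$: because $\phi\in\mathcal C_0^\infty(\Omega)$ need not be supported in $\Omega(R)$ for an arbitrary $R$, one must check that no integration by parts is performed in the longitudinal direction. This is what makes the structure of the magnetic cross term crucial — the $\A$-part has to vanish algebraically (by the reality argument above), rather than through an integration by parts that would generate unwanted boundary terms.
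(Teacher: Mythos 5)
Your argument is correct and follows the same route as the paper: expand $|(-i\nabla+\A)(J_1\phi)|^2$ via the Leibniz rule, observe that the $\A$-dependent part of the cross term is $2\Re[\,i\,|\phi|^2 J_1\nabla J_1\cdot\A\,]=0$ since the bracket is $i$ times a real quantity, reduce the remaining cross term to $J_1\nabla J_1\cdot\nabla|\phi|^2$, and integrate by parts in the transverse variables using $\nabla\cdot(J_1\nabla J_1)=|\nabla J_1|^2-\lambda_1^{\Dir}(\omega)J_1^2$. The paper suppresses the remark that the magnetic part of the cross term vanishes pointwise and that the integration by parts is purely transverse (so no contribution appears at $|s|=R$), whereas you make both points explicit; that is the only difference.
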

\begin{proof}
We have, for all $R>0$ and $\phi\in\mathcal{C}_{0}^{\infty}(\Omega)$:
$$\int_{\Omega(R)} |(-i\nabla+\A)J_{1}\phi|^2\dx t=\int_{\Omega(R)} |-i\phi\nabla J_{1}+J_{1}(-i\nabla+\A)\phi|^2\,\dx t.$$
This becomes:
\begin{eqnarray*}
&&\int_{\Omega(R)} |(-i\nabla+\A)J_{1}\phi|^2\dx t\\
&&=\int_{\Omega(R)} |\phi|^2 |\nabla J_{1}|^2\dx t+\int_{\Omega(R)} J_{1}^2|(-i\nabla+\A)\phi|^2\dx t+2\Re\left(\int_{\Omega(R)} \overline{\phi} J_{1}\nabla J_{1}\cdot \nabla \phi\dx t\right)\\
&&=\int_{\Omega(R)} |\phi|^2 |\nabla J_{1}|^2\dx t+\int_{\Omega(R)} J_{1}^2|(-i\nabla+\A)\phi|^2\dx t+\int_{\Omega(R)} J_{1}\nabla J_{1}\cdot\nabla\left(|\phi|^2\right) \dx t\\
&&=\int_{\Omega(R)} |\phi|^2 |\nabla J_{1}|^2\dx t+\int_{\Omega(R)} J_{1}^2|(-i\nabla+\A)\phi|^2\dx t-\int_{\Omega(R)} \nabla\cdot(J_{1}\nabla J_{1})|\phi|^2\dx t.
\end{eqnarray*}
\end{proof}
Following an idea of \cite{EK05}, for $\psi\in \mathcal{C}^\infty_{0}(\Omega)$ we let $\psi=J_{1}\phi$, with $\phi\in \mathcal{C}^\infty_{0}(\Omega)$
We deduce:
$$\int_{\Omega} |(-i\nabla+\A)\psi|^2-\la_{1}^\Dir(\omega)|\psi|^2\dx t=\int_{\Omega} J_{1}^2|(-i\nabla+\A)\phi|^2\dx t.$$
Let us now establish a lower bound for $\int_{\Omega} J_{1}^2|(-i\nabla+\A)\phi|^2\dx t$. Let us introduce a partition of unity:
$$\chi_{0,R}^2(s)+\chi_{1,R}^2(s)=1,$$
where $\chi_{j,R}(s)=\chi_{j}\left(R^{-1}s\right)$ with $\chi_{0}$ such that $\chi_{0}(s)=0$ for $s\in[-1/2,1/2]$ and $\chi_{0}(s)=1$ for $|s|\geq 1$.
We have:
$$\int_{\Omega} \frac{1}{1+s^2}|J_{1}\phi|^2\dx t\leq \int_{\Omega} \frac{1}{s^2}|J_{1}\chi_{0,R}\phi|^2\dx t+\int_{\Omega}|J_{1}\chi_{1,R}\phi|^2\dx t,$$
By the one dimensional Hardy inequality, we get:
$$\int_{\Omega} \frac{1}{s^2}|J_{1}\chi_{0,R}\phi|^2\dx t\leq 4\int_{\Omega} J_{1}^2(\dr_{s}|\chi_{0,R}\phi|)^2\dx t\leq 4 \int_{\Omega} J_{1}^2(\nabla|\chi_{0,R}\phi|)^2\dx t.$$
The diamagnetic inequality (see \cite[Chapter 2]{FouHel10}) implies that:
$$\int_{\Omega} J_{1}^2(\nabla|\chi_{0,R}\phi|)^2\dx t\leq \int_{\Omega} J_{1}^2|(-i\nabla+\A)\chi_{0,R}\phi|^2\dx t.$$
We infer:
$$\int_{\Omega} \frac{1}{1+s^2}|J_{1}\phi|^2\dx t\leq 4\int_{\Omega} J_{1}^2|(-i\nabla+\A)\chi_{0,R}\phi|^2\dx t+\int_{\Omega}|J_{1}\chi_{1,R}\phi|^2\dx t.$$
We apply \eqref{IMS-J0} to $\chi_{1,R}\phi$ and we get:
$$\int_{\Omega} J_{1}^2|(-i\nabla+\A)\chi_{1,R}\phi|^2\dx t=\int_{\Omega} |(-i\nabla+\A)J_{1}\chi_{1,R}\phi|^2\dx t-\la_{1}^\Dir(\omega)\int_{\Omega} J_{1}^2|\chi_{1,R}\phi|^2\dx t$$
and we deduce:
$$(\la_{1}^\Dir(\B,\Omega(R))-\la_{1}^\Dir(\omega))\int_{\Omega}|J_{1}\chi_{1,R}\phi|^2\dx t\leq \int_{\Omega} J_{1}^2|(-i\nabla+\A)\chi_{1,R}\phi|^2\dx t,$$
where $\la_{1}^\Dir(\B,\Omega(R))$ denotes the lowest eigenvalue of the magnetic Dirichlet Laplacian on $\Omega(R)$.
The diamagnetic inequality implies (see again \cite[Prop. 2.1.3]{FouHel10}) that we have a strict increasing of the energy in presence of a magnetic field ($R\geq R_{0}$): 
$$\la_{1}^\Dir(\B,\Omega(R))>\la_{1}^\Dir(0,\Omega(R))\geq\la_{1}^\Dir(\omega).$$ 
We infer:
$$\int_{\Omega}|J_{1}\chi_{1,R}\phi|^2\dx t\leq \left(\la_{1}^\Dir(\B,\Omega(R))-\la_{1}^\Dir(\omega)\right)^{-1}\int_{\Omega} J_{1}^2 |(-i\nabla+\A)\chi_{1,R}\phi|^2\dx t$$
so that:
\begin{multline*}
\int_{\Omega} \frac{1}{1+s^2}|J_{1}\phi|^2\dx t\leq 4\int_{\Omega} J_{1}^2|(-i\nabla+\A)\chi_{0,R}\phi|^2\dx t\\
+\left(\la_{1}^\Dir(\B,\Omega(R))-\la_{1}^\Dir(\omega)\right)^{-1}\int_{\Omega} J_{1}^2 |(-i\nabla+\A)\chi_{1,R}\phi|^2\dx t.
\end{multline*}
We deduce that:
\begin{equation*}
\int_{\Omega} \frac{1}{1+s^2}|J_{1}\phi|^2\dx t \leq \max\left(4,\left(\la_{1}^\Dir(\B,\Omega(R))-\la_{1}^\Dir(\omega)\right)^{-1}\right)\mathcal{Q}_{R,\A}(\phi),
\end{equation*}
where:
$$\mathcal{Q}_{R,\A}(\phi)=\int_{\Omega} J_{1}^2 |(-i\nabla+\A)\chi_{0,R}\phi|^2\dx t+\int_{\Omega} J_{1}^2 |(-i\nabla+\A)\chi_{1,R}\phi|^2\dx t.$$
Let us write a formula in the spirit of the so-called \enquote{IMS} formula (see \cite[Chapter 3]{CFKS87}):
\begin{eqnarray*}
&&\langle(-i\nabla+\A)J_{1}^2(-i\nabla+\A)\phi,\chi_{j,R}^2 \phi\rangle=\langle J_{1}^2(-i\nabla+\A)\phi,(-i\nabla+\A)\chi_{j,R}^2 \phi\rangle\\
&&=\langle\chi_{j,R} J_{1}^2(-i\nabla+\A)\phi,-i(\nabla\chi_{j,R})\phi\rangle+\langle J_{1}^2\chi_{j,R}(-i\nabla+\A)\phi,(-i\nabla+\A)(\chi_{j,R}\phi)\rangle.
\end{eqnarray*}
Then, we get:
\begin{multline*}
\langle\chi_{j,R} J_{1}^2(-i\nabla+\A)\phi,-i(\nabla\chi_{j,R})\phi\rangle+\langle J_{1}^2\chi_{j,R}(-i\nabla+\A)\phi,(-i\nabla+\A)(\chi_{j,R}\phi)\rangle\\
=\langle J_{1}^2(-i\nabla+\A)(\chi_{j,R}\phi),-i(\nabla\chi_{j,R})\phi\rangle+\langle J_{1}^2\chi_{j,R}(-i\nabla+\A)\phi,(-i\nabla+\A)(\chi_{j,R}\phi)\rangle.\\
-\|J_{1}\nabla\chi_{j,R}\phi\|^2.
\end{multline*}
We deduce that:
\begin{equation*}
\Re\left(\langle(-i\nabla+\A)J_{1}^2(-i\nabla+\A)\phi,\chi_{j,R}^2 \phi\rangle\right)=\int_{\Omega} |J_{1}(-i\nabla+\A)(\chi_{j,R}\phi)|^2\dx t-\|J_{1}\nabla\chi_{j,R}\phi\|^2
\end{equation*}
and thus:
\begin{equation*}
\Re\left(\langle(-i\nabla+\A)J_{1}^2(-i\nabla+\A)\phi, \phi\rangle\right)=\mathcal{Q}_{R,\A}(\phi)-\sum_{j=1}^2\|J_{1}\nabla\chi_{j,R}\phi\|^2.
\end{equation*}
We notice that:
$$\sum_{j=1}^2\|J_{1}\nabla\chi_{j,R}\phi\|^2\leq CR^{-2}\int_{\Omega(R)}|J_{1}\phi|^2\dx t.$$
Moreover, by using Lemma \ref{IMS-J0} and the min-max principle, we find:
$$\la_{1}^{\Dir,\Neu}(B,\Omega(R))\int_{\Omega(R)} |J_{1}\phi|^2 \dx t\leq\int_{\Omega(R)} J_{1}^2|(-i\nabla+\A)\phi|^2\dx t+\la_{1}^\Dir(\omega)\int_{\Omega(R)} J_{1}^2|\phi|^2\dx t$$
so that:
$$\int_{\Omega(R)}|J_{1}\phi|^2\dx t\leq\left(\la_{1}^{\Dir,\Neu}(B,\Omega(R))-\la_{1}^\Dir(\omega))\right)^{-1}\int_{\Omega(R)} J_{1}^2 |(-i\nabla+\A)\phi|^2\dx t.$$
Since we have $\la_{1}^\Dir(B,\Omega(R))\geq \la_{1}^{\Dir,\Neu}(B,\Omega(R))$, we conclude that:
\begin{multline*}
\int_{\Omega} \frac{1}{1+s^2}|J_{1}\phi|^2\dx t\\
 \leq(1+CR^{-2})\max\left(4,\left(\la_{1}^{\Dir,\Neu}(\B,\Omega(R))-\la_{1}^\Dir(\omega)\right)^{-1}\right)\int_{\Omega} J_{1}^2 |(-i\nabla+\A)\phi|^2\dx t.
\end{multline*}

\subsection{Proof of Proposition~\ref{perturbed}}
This section is devoted to the proof of Proposition \ref{perturbed}. We can write:
$$D\Phi_{\eps}=\Id+\mathcal{E}(s,t),$$
where $\mathcal{E}$ is smooth and compactly supported in $K$ and satisfies $|\mathcal{E}|\leq C\eps$. We deduce that the metrics $G_{\eps}$ induced by $\Phi_{\eps}$ satisfies:
$$G_{\eps}^{-1}=\Id+\tilde{\mathcal{E}}(s,t),$$
where $\tilde{\mathcal{E}}$ is smooth and compactly supported in $K$ and such that $|\tilde{\mathcal{E}}|\leq C\eps$. The new vector potential becomes $\mathcal{A}_{\eps}=\A+\hat{\mathcal{E}}(s,t)$. Let us introduce a smooth cutoff function $\chi$ being $1$ on $K$.
The quadratic form on the perturbed tube is given by:
$$Q_{\mathcal{A}_{\eps},\eps}(\psi)=\int_{\Omega} \langle G_{\eps}^{-1}(-i\nabla+\mathcal{A}_{\eps})\psi,(-i\nabla+\mathcal{A}_{\eps})\psi\rangle\,|g_{\eps}|^{1/2}\dx t$$
and we get, for $\psi\in\mathcal{C}_{0}^\infty(\Omega)$:
\begin{multline*}
Q_{\mathcal{A}_{\eps},\eps}(\psi)-\la_{1}^\Dir(\omega)\int_{\Omega}|\psi|^2 |g_{\eps}|^{1/2}\dx t\\
\geq\int_{\Omega} |(-i\nabla+\mathcal{A}_{\eps})\psi|^2\dx t-\la_{1}^\Dir(\omega)\int_{\Omega}|\psi|^2\dx t-C\eps\int_{\Omega}\chi^2 |(-i\nabla+\mathcal{A}_{\eps})\psi|^2\dx t-C\eps\int_{\Omega}\chi^2|\psi|^2\dx t.
\end{multline*}
This can be rewritten in the form:
\begin{eqnarray*}
&Q_{\mathcal{A}_{\eps},\eps}(\psi)-\la_{1}^\Dir(\omega)\int_{\Omega}|\psi|^2 |g_{\eps}|^{1/2}\dx t\geq&\int_{\Omega} |(-i\nabla+\mathcal{A}_{\eps})\psi|^2\dx t-\la_{1}^\Dir(\omega)\int_{\Omega}|\psi|^2\dx t\\
&&-C\eps\left(\int_{\Omega}\chi^2 |(-i\nabla+\mathcal{A}_{\eps})\psi|^2\dx t-\la_{1}^\Dir(\omega)\int_{\Omega}|\chi\psi|^2\dx t\right)\\
&&-\tilde C\eps\int_{\Omega}\chi^2|\psi|^2\dx t.
\end{eqnarray*}
Then, the following identity holds:
\begin{eqnarray*}
\int_{\Omega}\chi^2 |(-i\nabla+\mathcal{A}_{\eps})\psi|^2\dx t=\int_{\Omega}|(-i\nabla+\mathcal{A}_{\eps})(\chi\psi)|^2\dx t-\int_{\Omega} |(\nabla\chi)\psi|^2\,dt+\frac{1}{2}\int_{\Omega} \Delta(\chi^2)|\psi|^2\dx t.
\end{eqnarray*}
We deduce that:
$$Q_{\mathcal{A}_{\eps},\eps}(\psi)-\la_{1}^\Dir(\omega)\int_{\Omega}|\psi|^2 |g_{\eps}|\dx t\geq q_{\eps}(\psi)-C\eps q_{\eps}(\chi\psi)-C\int_{\Omega}\underline{\chi}^2|\psi|^2\dx t,$$
where $\underline{\chi}$ is a smooth cutoff function supported on a compact slightly bigger than $K$ and where $q_{\eps}$ is defined by:
$$q_{\eps}(\psi)=\int_{\Omega} |(-i\nabla+\mathcal{A}_{\eps})\psi|^2\dx t-\la_{1}^\Dir(\omega)\int_{\Omega}|\psi|^2\dx t.$$
Writing $\psi=J_{1}\varphi$ with $\varphi\in\mathcal{C}^\infty_{0}(\Omega)$, we recall that:
$$q_{\eps}(\chi\psi)=\int_{\Omega} J_{1}^2|(-i\nabla+\mathcal{A}_{\eps})\chi\varphi|^2\dx t.$$
We infer an upper bound in the form:
$$q_{\eps}(\chi\psi)\leq 2\int_{\Omega} J_{1}^2|(-i\nabla+\mathcal{A}_{\eps})\varphi|^2\dx t+C\int_{\Omega}\underline{\chi}^2 |\varphi|^2\dx t=2q_{\eps}(\psi)+C\int_{\Omega}\underline{\chi}^2 |\varphi|^2\dx t.$$
We deduce:
$$Q_{\mathcal{A}_{\eps},\eps}(\psi)-\la_{1}^\Dir(\omega)\int_{\Omega}|\psi|^2 |g_{\eps}|\dx t\geq (1-C\eps)q_{\eps}(\psi)-C\eps\int_{\Omega}\underline{\chi}^2|\psi|^2\dx t.$$
We again notice that:
$$q_{\eps}(\psi)=\int_{\Omega} J_{1}^2|(-i\nabla+\mathcal{A}_{\eps})\varphi|^2\dx t$$
so that:
$$q_{\eps}(\psi)\geq (1-\eps) \int_{\Omega} J_{1}^2|(-i\nabla+\A)\varphi|^2\dx t-C\eps\int_{\Omega} \chi^2|J_{1}\varphi|^2\dx t.$$
We get:
\begin{multline*}
Q_{\mathcal{A}_{\eps},\eps}(\psi)-\la_{1}^\Dir(\omega)\int_{\Omega}|\psi|^2 |g_{\eps}|\dx t\geq(1-C\eps) \left(\int_{\Omega}|(-i\nabla+\A)\psi|^2\dx t-\la_{1}^\Dir(\omega)\int_{\Omega}|\psi|^2\dx t\right)\\
-C\eps\int_{\Omega} \underline{\chi}^2|\psi|^2\dx t.
\end{multline*}
We use the magnetic Hardy inequality:
$$\int_{\Omega}|(-i\nabla+\A)\psi|^2\dx t-\la_{1}^\Dir(\omega)\int_{\Omega}|\psi|^2\dx t\geq\int_{\Omega}\frac{c_{R}(\B)}{1+s^2}|\psi|^2\dx t.$$
to infer, for $\eps$ small enough:
$$Q_{\mathcal{A}_{\eps},\eps}(\psi)-\la_{1}^\Dir(\omega)\int_{\Omega}|\psi|^2 |g_{\eps}|\dx t\geq 0.$$

\subsection{Proof of Proposition~\ref{strong-b}}
In this section we prove Proposition \ref{strong-b}. We first write:
$$\mathfrak{Q}^{[d]}_{1,b\A}(\psi)=\int_{\R\times\omega} G^{-1}(-i\nabla+b\mathcal{A})\psi\cdot (-i\nabla+b\mathcal{A})\psi\, |g|^{1/2}\dx t.$$
We split this integral into two parts:
\begin{eqnarray*}
&\mathfrak{Q}^{[d]}_{1,b\A}(\psi)=&\int_{\Omega(R_{0})} G^{-1}(-i\nabla+b\mathcal{A})\psi\cdot (-i\nabla+b\mathcal{A})\psi\, |g|^{1/2}\dx t\\
&                                                         &+\int_{\Omega\setminus \Omega(R_{0})} G^{-1}(-i\nabla+b\mathcal{A})\psi\cdot (-i\nabla+b\mathcal{A})\psi\, |g|^{1/2}\dx t
\end{eqnarray*}
Since the curvature is zero on $\Omega\setminus \Omega(R_{0})$, we have:
$$\int_{\Omega\setminus \Omega(R_{0})} G^{-1}(-i\nabla+b\mathcal{A})\psi\cdot (-i\nabla+b\mathcal{A})\psi\, |g|^{1/2}\dx t=\int_{\Omega\setminus \Omega(R_{0})} |(-i\nabla+b\A)\psi|^2\dx t.$$
and the diamagnetic inequality implies that:
$$\int_{\Omega\setminus \Omega(R_{0})} |(-i\nabla+b\A)\psi|^2\dx t\geq \la_{1}^\Dir(\omega)\int_{\Omega\setminus \Omega(R_{0})}|\psi|^2\dx t=\la_{1}^\Dir(\omega)\int_{\Omega\setminus \Omega(R_{0})}|\psi|^2\,|g|^{1/2}\dx t.$$
Moreover we have:
$$\int_{\Omega(R_{0})} G^{-1}(-i\nabla+b\mathcal{A})\psi\cdot (-i\nabla+b\mathcal{A})\psi\, |g|^{1/2}\dx t\geq\la_{1}^{\Dir,\Neu}(b\B, \Omega(R_{0}))\int_{\Omega(R_{0})}|\psi|^2\,|g|^{1/2}\dx t,$$
where $\la_{1}^{\Dir,\Neu}(b\B, \Omega(R_{0}))$ is the lowest eigenvalue of the magnetic Laplacian $(-i\nabla+b\A)^2$ defined on $\Phi(\Omega(R_{0}))$ with Neumann condition on $\Phi(|s|=R_{0})$. Since the magnetic field does not vanish on $\Omega(R_{0})$, it is standard to establish (see \cite[Section 1.4.3]{FouHel10}) that there exists $c, b_{0}>0$ such that for $b\geq b_{0}$, we have:
$$\la_{1}^{\Dir,\Neu}(b\B, \Omega(R_{0}))\geq c b\inf_{x\in\Phi(\Omega(R_{0}))}\|\B(x)\|,$$
where $\|\B(x)\|$ is the norm of $\B$ defined in \cite[Section 1.4.3]{FouHel10}.
For $b$ such that we have $\displaystyle{c b\inf_{x\in\Phi(\Omega(R_{0}))}\|\B(x)\|\geq \la^\Dir_{1}(\omega)}$, we get the conclusion.

\paragraph{Acknowledgments.} 
The authors would like to dedicate this paper 
to the memory of Pierre Duclos (1948--2010)
who stimulated their interest in spectral theory of quantum waveguides.
The first author has been partially supported 
by RVO61389005 and the GACR grant No.\ P203/11/0701.
The second author would like to thank the Mittag-Leffler Institute 
where part of this paper was written.

\bibliographystyle{mnachrn}
\bibliography{BIB}
\end{document}